\documentclass[a4paper,11pt]{article}
\usepackage{amsfonts}
\usepackage{amsmath,cite,tikz,multirow,amssymb,mathrsfs,latexsym,bm,inputenc,longtable,amsfonts}
\usepackage{graphicx}
\usepackage{tabularx}
\usepackage{array}
\usepackage{float}
\usepackage[english]{babel}
\usepackage{enumitem}
\usepackage{colortbl}
\usepackage{xcolor}
\usepackage{hhline}
\usepackage{booktabs}
\begin{document}

\newcommand{\qed}{\hphantom{.}\hfill $\Box$\medbreak}
\newcommand{\Proof}{\noindent{\bf Proof \ }}

\newtheorem{theorem}{Theorem}[section]
\newtheorem{lemma}[theorem]{Lemma}
\newtheorem{corollary}[theorem]{Corollary}
\newtheorem{remark}[theorem]{Remark}
\newtheorem{example}[theorem]{Example}
\newtheorem{definition}[theorem]{Definition}
\newtheorem{construction}[theorem]{Construction}
\newtheorem{proposition}[theorem]{Proposition}

\title{\large{\bf Local unitary equivalence of orthogonal arrays and \\ related linear codes\footnote{Supported by the Science Research Project of the Hebei Education
Department (Grant No. JCZX2026032), the Natural Science Foundation of Hebei Province (Grant No. A2025205023), the National Natural Science Foundation of China (Grant No. 12601632, 12571493).}}}

\author{Miaomiao Zheng$^{a}$, Yajuan Zang$^{b}$, Xiuling Shan$^{a}$, Zihong Tian$^{a}$\thanks{Corresponding author. E-mail address: tianzh68@163.com}  \\
\small$^a$ School of Mathematical Sciences, Hebei Normal University,
          Shijiazhuang 050024, P. R. China          \\
\small $^b$ School of Teacher Education, Hebei Normal University,
          Shijiazhuang 050024,  P. R. China}

\date{}

\maketitle

\noindent {\bf Abstract:}
Orthogonal arrays (OAs) are combinatorial configurations with applications in experimental design, error-correcting codes, and quantum information.
Local unitary (LU) equivalence provides a natural framework for classifying multipartite entangled states.
Using the correspondence between OAs and quantum states, Goyeneche and \.{Z}yczkowski [Phys. Rev. A, 2014, 90: 022316] posed the problem of determining when OAs are $LU$ equivalent.
In this paper, we construct OAs from generator matrices and establish conditions for Fourier-based $LU$  equivalence of OAs and irredundant orthogonal arrays (IrOAs). For prime alphabets, we identify the Fourier partners of the linear OAs considered here with the arrays of their corresponding dual codes. This gives an explicit connection between $LU$  equivalence and coding theory. We also identify families of $LU$ equivalent OAs arising from specific classes of linear codes.
\vspace{0.2cm}

\noindent {\bf Keywords}: local unitary equivalence,  orthogonal arrays,  irredundant orthogonal arrays, linear codes


\section{Introduction}

An $r\times N$ array $A$ with entries from a $d$-element set $S$ is an \emph{orthogonal array} (OA) with $r$ runs, $N$ factors, $d$ levels, and strength $k$, denoted by OA$(r,N,d,k)$, if every $r\times k$ subarray contains each $k$-tuple of symbols from $S$ exactly $\lambda$ times as a row, where $\lambda=r/d^k$. An OA$(r,N,d,k)$ is \emph{irredundant}, denoted by IrOA$(r,N,d,k)$, if all rows of every $r\times(N-k)$ subarray are distinct.

Orthogonal arrays were introduced by Rao~\cite{Rao} in 1947 and have important applications in diverse fields. For example, they can be used to define generalized forms of Latin hypercube sampling and lattice sampling in fractional factorial experiments~\cite{Owen}, and to generate substitution sequences and permutation mappings for image encryption~\cite{xuming}. In coding theory, the rows of an orthogonal array can be viewed as codewords~\cite{Hedayat1999}.

Orthogonal arrays have also been widely used to study quantum entanglement~\cite{Goyeneche2014,Li,Goyeneche2015,N-Arul}. Quantum entanglement is a fundamental feature that distinguishes quantum systems from classical systems. A pure multipartite state is called $k$-uniform if every reduction to $k$ parties is maximally mixed. Such states have applications in quantum teleportation, quantum key distribution, dense coding, and error-correcting codes~\cite{Raissi,Du}. In particular, IrOAs provide a useful method for constructing $k$-uniform states~\cite{Goyeneche2014,Zang1,Zang2,Pang1,Chen}. Specifically, consider an
  OA$(r,N,d,k)$ with distinct rows, written as
 \begin{center}
$A=\left(
\begin{array}{cccc}
s^1_1 & s^1_2 & \cdots & s^1_N \\
 s^2_1 & s^2_2 & \cdots & s^2_N \\
 \vdots & \vdots & \cdots & \vdots\\
 s^r_1 & s^r_2 & \cdots & s^r_N \\
 \end{array}
\right),$
\end{center}
where $ s^j_i\in \{0,1,\dots,d-1\}$ for $j=1,2,\dots,r$ and $i=1,2,\dots,N$. Then
$$|\psi\rangle= \frac{1}{\sqrt{r}}(|s^1_1  s^1_2  \cdots s^1_N\rangle +
 |s^2_1  s^2_2  \cdots s^2_N\rangle +\cdots +
 |s^r_1   s^r_2   \cdots s^r_N\rangle)$$
is a normalized pure quantum state in $ (\mathbb{C}^d)^{\otimes N},$ where $ \mathbb{C}^d $ is the complex vector space of dimension $d$,  $|s^j_1  s^j_2  \cdots s^j_N\rangle$ denotes the product state $|s^j_1\rangle \otimes |s^j_2\rangle \otimes \cdots \otimes |s^j_N\rangle$ and $\otimes$ denotes
the tensor product. Specifically, $|i\rangle$ is a unit column vector in $\mathbb{C}^d$ with its $(i+1)$-th component equal to $1$, and $\{|0\rangle, |1\rangle, \ldots, |d-1\rangle\}$ is an orthonormal basis for $\mathbb{C}^d$.
Furthermore, if $A$ is an IrOA$(r, N, d, k)$, then $|\psi\rangle$ is a $k$-uniform state \cite{Goyeneche2014}.

A central problem in quantum information theory is to classify the different forms of multipartite entanglement. The
nonlocal nature of entanglement motivates classifications under local operations. Two pure states are equivalent under stochastic local operations and classical communication (SLOCC) if they can be converted into each other with nonzero probability of success~\cite{Dur}. Three-qubit pure states have six SLOCC classes, whereas four-qubit pure states have infinitely many SLOCC classes that can be organized into nine families~\cite{Dur,Verstraete}. The classification becomes more complicated as the number of parties increases. Local unitary (LU) equivalence is a special case of SLOCC equivalence that preserves the degree of entanglement of quantum states \cite{Liu,Zhang}. It therefore yields a finer classification.
Two $N$-partite pure states $|\phi\rangle$ and $|\psi\rangle$ are \emph{LU equivalent} if there exist unitary operators $U_1,U_2,\dots,U_N$ such that $|\phi\rangle=(U_1\otimes U_2\otimes\cdots\otimes U_N)|\psi\rangle$~\cite{Kraus,Liu}.

Using the correspondence between quantum states and orthogonal arrays, Goyeneche and \.{Z}yczkowski~\cite{Goyeneche2014} posed the problem of determining when OAs are $LU$  equivalent. Two orthogonal arrays with distinct rows, $\mathrm{OA}(r,N,d,k)$ and $\mathrm{OA}(r',N,d,k')$ are \emph{LU equivalent},   if their corresponding quantum states are $LU$  equivalent.
The arrays must have the same number of columns and levels, but their numbers of rows and strengths may differ. Throughout the paper, $LU$  equivalence refers to the normalized, positive equal-amplitude states associated with arrays having distinct rows; the order of the parties is fixed. An OA of strength $k$ also has every smaller positive strength, so a stated strength need not be maximal.
Du et al.~\cite{Du} proved that there exists an  $\mathrm{OA}(d^n,n+1,d,n)$, that is $LU$  equivalent to an $\mathrm{OA}(d,n+1,d,1)$ for any positive integer $n$.
Zhang et al.\cite{Zhang-Pang}  presented several methods for constructing infinitely many classes of $LU$  equivalent  OAs and proved that a linear
OA and its dual are $LU$  equivalent. Ramadas and Lakshminarayan~\cite{N-Arul} showed that varying the phases of the states associated with an IrOA$(r,N,d,k)$ yields infinitely many $LU$  equivalence classes when $r>Nd-(N-1)$.

In this paper, we establish conditions for $LU$  equivalence of OAs, focusing on transformations based on Fourier matrices, which we denote by $LU_F$. We construct OAs from generator matrices and investigate $LU_F$ equivalence of both OAs and IrOAs. Specially, we show that certain pairs of $LU_F$ equivalent OAs correspond to linear codes and their duals.

The remainder of this paper is organized as follows. Section 2 presents a construction of OAs from generator matrices and establishes conditions for LU$_F$ equivalence to OAs of strength $1$. Section 3 considers IrOAs of strength $1$ or $2$. Section 4 connects LU$_F$ equivalence of linear OAs with duality of linear codes and gives examples based on
MDS, Hamming, Reed--Muller, cyclic, and negacyclic codes. Section 5 concludes the paper.

We first introduce the notation used throughout the paper.
Let $\mathbb{N}^+$ be the set of positive integers.
For an integer $d\geq2$, let $[d]=\{0,1,\dots ,d-1\}$. $\mathbb{Z}_d$ represents the residue ring of integers modulo $d$.
$\mathbb{Z}_d^*$ is the set of invertible elements in $\mathbb{Z}_d$ with respect to multiplication. $\mathbb{Z}_d^n$ is the set of $n$-tuples over $\mathbb{Z}_d$. $m \mathbb{Z}_d$ denotes the multiset in which each element of $\mathbb{Z}_d$ occurs $m$ times.  $\textbf{0}_d$ and $\textbf{1}_d$ represent the length-$d$ vectors (with orientation determined by context)  consisting entirely of $0$s and entirely of $1$s, respectively.
$I_d$ is the identity matrix of order $d$.
$F_d$ is the Fourier matrix of order $d$.
$|A|$ denotes the determinant when $A$ is a square matrix and cardinality when $A$ is a finite set.    $B^T$ is the transpose of a matrix $B$. $\operatorname{rank}(C)$ represents the rank of the matrix $C$.
 $\delta_{ij}$ is the Kronecker delta, i.e., $\delta_{ij}=1$ if $i=j$ and $0$ otherwise. For two quantum states $|\psi\rangle$ and $|\phi\rangle$,
 $\langle\psi|\phi\rangle$ denotes the inner product and
$|\psi\rangle\langle\phi|$ denotes the outer product of $|\psi\rangle$ and $|\phi\rangle$.

\section{LU equivalence of OAs}
In this section, we construct OAs from generator matrices and establish conditions under which they are $LU$  equivalent to OAs of strength $1$.
Specially, we describe the relationship between the generator matrices of OAs that are connected by Fourier-based local unitary transformations.

 \subsection{A construction of OAs from generator matrices}

We begin with the required definitions.
Let $D$ be an $m\times n$ matrix over $\mathbb{Z}_d$, where $d\geq2$. Choose integer representatives for its entries. For $1\leq k\leq\min\{m,n\}$, a \emph{minor of order $k$} is the determinant of a $k\times k$ submatrix of this integer lift. Let $D_k$ be the nonnegative greatest common divisor of these minors, with $D_k=0$ if all of them vanish. Although $D_k$ can depend on the lift, the condition $\gcd(D_k,d)=1$ does not: it says that the minors generate the unit ideal modulo $d$. We write $D_k\in\mathbb{Z}_d^*$ for this condition. Over a composite modulus, it does not require any single minor to be a unit.

\begin{lemma}{\rm (\!\!\cite{Liyu})}\label{l1}
Let $D$ be an $m \times k$ matrix over $\mathbb{Z}_d$, where $d\geq2$ and $1\leq k\leq m$, and let $b$ range over $\mathbb{Z}_d^k$. Then the system $(x_1,x_2,\dots,x_m)D\equiv b\pmod{d}$ has $d^{m-k}$ solutions for every $b$ if and only if $D_k \in \mathbb{Z}_d^*$.
\end{lemma}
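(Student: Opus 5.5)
The plan is to treat the map $\varphi:\mathbb{Z}_d^m\to\mathbb{Z}_d^k$, $x\mapsto xD$, as a homomorphism of $\mathbb{Z}_d$-modules. The system has exactly $d^{m-k}$ solutions for every $b$ if and only if $\varphi$ is surjective. Indeed, if $\varphi$ is surjective, every fibre is a coset of $\ker\varphi$, and $|\ker\varphi|=d^m/d^k$. Conversely, if every $b$ has $d^{m-k}\ge 1$ solutions, then $\varphi$ is surjective. So the task reduces to showing that $\varphi$ is surjective if and only if the $k\times k$ minors of $D$ generate the unit ideal modulo $d$.

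By the Chinese remainder theorem, write $d=\prod p_i^{e_i}$. Then $\mathbb{Z}_d^k\cong\prod\mathbb{Z}_{p_i^{e_i}}^k$, and $\varphi$ is surjective if and only if each reduction modulo $p_i^{e_i}$ is surjective. Likewise, $\gcd(D_k,d)=1$ if and only if, for each $p_i$, some $k\times k$ minor is not divisible by $p_i$. It therefore suffices to work over the local ring $R=\mathbb{Z}_{p^e}$.

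Over $R$, a Nakayama-type argument shows that $\varphi$ is surjective if and only if its reduction modulo $p$, namely $\mathbb{F}_p^m\to\mathbb{F}_p^k$, is surjective. One direction is trivial. For the other, if the image $M$ satisfies $M+pR^k=R^k$, then iterating gives $R^k=M+p^eR^k=M$. Over the field $\mathbb{F}_p$, surjectivity means that $\bar D$ has rank $k$, which holds exactly when some $k\times k$ minor of $\bar D$ is nonzero, i.e. some minor is not divisible by $p$. Chaining these equivalences across all primes $p\mid d$ yields the lemma.

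The main subtlety is that the determinant criterion cannot be applied directly over a composite $d$, since no single minor need be a unit; the localization/CRT step handles exactly this. An alternative route would use the Smith normal form $D=PSQ$ with unimodular integer $P,Q$. In that approach, $D_k=s_1\cdots s_k$ up to sign, and surjectivity holds if and only if every $s_j$ is a unit modulo $d$. I would keep the CRT/Nakayama argument as the primary proof because it avoids lifting issues. One point to check is that the iteration $M+p^jR^k=R^k$ is valid: substitute $R^k=M+pR^k$ into itself to get $pR^k\subseteq pM+p^2R^k\subseteq M+p^2R^k$, and continue inductively.
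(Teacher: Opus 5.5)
Your proof is correct and complete. The reduction of ``exactly $d^{m-k}$ solutions for every $b$'' to surjectivity of $x\mapsto xD$ is valid. So are the CRT splitting into components modulo $p^{e}$, the Nakayama-type lifting $R^k=M+pR^k\Rightarrow R^k=M+p^eR^k=M$, and the rank criterion over $\mathbb{F}_p$. Your observation that $p\nmid D_k$ iff some $k\times k$ minor is prime to $p$ is what makes the condition independent of the integer lift.

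The paper gives no argument of its own here: the lemma is quoted from \cite{Liyu}, so there is no internal proof to compare against.

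Compared with the Smith normal form route you mention, which is the usual tool for counting solutions of linear systems over $\mathbb{Z}_d$, your argument needs only the rank criterion over a field. It avoids identifying $D_k$ with $\pm s_1\cdots s_k$, which requires invariance of determinantal divisors under unimodular transformations (via Cauchy--Binet). It also makes transparent the paper's remark that over composite $d$ no single minor need be a unit, because different primes can be witnessed by different minors. For instance, $D=(2,3)^T$ over $\mathbb{Z}_6$ has no unit minor, yet $x\mapsto 2x_1+3x_2$ is onto.

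The Smith form route buys more in return. For arbitrary $D$ it shows that $xD\equiv b$ is solvable iff $\gcd(s_j,d)$ divides the $j$th entry of $bQ^{-1}$ for each $j$. When the system is solvable, it has $d^{m-k}\prod_{j}\gcd(s_j,d)$ solutions.

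Your aside that your route ``avoids lifting issues'' is slightly overstated. Unimodular integer matrices reduce to invertible matrices modulo $d$, so the Smith form argument has no lifting problem either. This does not affect your proof.
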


\begin{construction}\label{c}
 Let $G$ be an $m \times n$ matrix over $\mathbb{Z}_d$, where $d\geq2$ and $m,n\in\mathbb{N}^+$. Then the array obtained by evaluating $xG$ for all $x\in\mathbb{Z}_d^m$ is an $\mathrm{OA}(d^m, n, d, k)$ over $\mathbb{Z}_d$ if and only if every $m \times k$ submatrix $D$ of $G$ satisfies $D_k \in \mathbb{Z}_d^*$, where $1\leq k \leq \min\{m, n\}$.
\end{construction}
\Proof Let $C$ be the $d^m\times m$ matrix whose rows are all vectors in $\mathbb{Z}_d^m$, and set $A=CG$, with multiplication performed modulo $d$. Selecting $k$ columns of $A$ gives a subarray $CD$, where $D$ consists of the corresponding columns of $G$. A vector $y\in\mathbb{Z}_d^k$ occurs in this subarray once for each solution of $xD=y$. By Lemma~\ref{l1}, every $y$ occurs exactly $d^{m-k}$ times if and only if $D_k\in\mathbb{Z}_d^*$. Requiring this for every selection of $k$ columns proves both directions.\qed

This construction permits repeated rows. The associated equal-amplitude state is used only when the rows are distinct; in particular, this holds whenever $G=(I_m\mid G')$.

\begin{example}
There exists an $\mathrm{OA}(6^3, 4, 6, 2)$ over $\mathbb{Z}_6$.
\end{example}

\Proof Let
$$
G = \begin{pmatrix}
1 & 1 & 1 & 1 \\
1 & 2 & 3 & 4 \\
1 & 1 & 2 & 2
\end{pmatrix} $$
be a $3 \times 4$ matrix over $\mathbb{Z}_6$.
One can verify that for any $3 \times 2$ submatrix $D$, we have $D_2 \equiv 1 \pmod{6}$.
Let $C$ be the $6^3 \times 3$ matrix whose rows consist of all $3$-tuples over $\mathbb{Z}_6$.
Define $A = CG$.
It can be verified that for any $2$ columns of $A$, each ordered pair over $\mathbb{Z}_6$ appears exactly $6$ times. Therefore, $A$ is an $\mathrm{OA}(6^3, 4, 6, 2)$.  \qed

The following example shows that when $d \geq 2$, for any positive integers $m, n$ with $m \geq n$, we can construct an $\mathrm{OA}(d^m, n, d, k)$ from an $m \times n$ matrix, where $k\leq n \leq m$.

\begin{theorem}
For all integers $d\geq2$ and $1\leq k\leq n\leq m$, there exists an $\mathrm{OA}(d^m,n,d,k)$.
\end{theorem}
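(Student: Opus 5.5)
We apply Construction~\ref{c} with the $m\times n$ matrix
$$G=\begin{pmatrix} I_n \\ \mathbf{0}_{(m-n)\times n}\end{pmatrix}$$
over $\mathbb{Z}_d$, that is, the first $n$ rows form the identity $I_n$ and the remaining $m-n$ rows are zero. (When $m=n$ this is just $G=I_n$.) By Construction~\ref{c}, the array obtained by evaluating $xG$ for all $x\in\mathbb{Z}_d^m$ is an $\mathrm{OA}(d^m,n,d,k)$ as soon as every $m\times k$ submatrix $D$ of $G$ satisfies $D_k\in\mathbb{Z}_d^*$. Since $k\le n\le m$, we have $k\le\min\{m,n\}$, so the construction applies.

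It remains to verify the minor condition. An $m\times k$ submatrix $D$ is obtained by choosing $k$ distinct columns $j_1<\dots<j_k$ of $G$. Column $j$ of $G$ is the standard unit vector $e_j\in\mathbb{Z}_d^m$. Hence the $k\times k$ submatrix of $D$ formed by rows $j_1,\dots,j_k$ is $I_k$, whose determinant is $1$. Every minor of order $k$ is an integer, so the gcd of all of them divides $1$, giving $D_k=1$. Thus $\gcd(D_k,d)=1$, i.e. $D_k\in\mathbb{Z}_d^*$. This holds for every choice of $k$ columns, and Construction~\ref{c} gives the desired OA.

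There is no real obstacle. The one point needing care is that Construction~\ref{c} requires $k\le\min\{m,n\}$, which is guaranteed by the hypothesis $k\le n\le m$. Concretely, the resulting array consists of $d^{m-n}$ copies of every $n$-tuple over $\mathbb{Z}_d$, so every $k$-tuple appears $d^{m-k}$ times in any $k$ columns. This gives a direct check independent of Lemma~\ref{l1}. Note that the array has repeated rows when $m>n$, which the theorem permits. If one wanted distinct rows for the quantum-state interpretation, one would instead need $n\ge m$ and $G=(I_m\mid G')$, but that is not required here.
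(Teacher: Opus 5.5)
Your proof is correct and is essentially the paper's argument. The paper takes an $m\times n$ matrix whose top $n\times n$ block is lower unitriangular, with arbitrary entries below the diagonal and in the extra rows, and picks the rows indexed by the chosen columns to get a unitriangular $k\times k$ minor, so $D_k=1$ and Construction~\ref{c} applies; your matrix $\left(\begin{smallmatrix} I_n\\ 0\end{smallmatrix}\right)$ is the special case with all free entries zero, and your direct count of $d^{m-k}$ occurrences is a valid check that bypasses Lemma~\ref{l1}.
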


\Proof Define an $m \times n$ matrix over $\mathbb{Z}_d$ as follows, where $*$ denotes an arbitrary element of $\mathbb{Z}_d$.
$$
G = \begin{pmatrix}
1 & 0 & \cdots & 0 \\
* & 1 & \cdots & 0 \\
\vdots & \vdots & \ddots & \vdots \\
*  & *  & \cdots & 1 \\
* & * & \cdots & * \\
\vdots & \vdots & \ddots & \vdots \\
* & * & \cdots & *
\end{pmatrix}.$$
  For any $ m \times k$ submatrix $D$ of $G$, we can find a $k \times k$ invertible submatrix $D'$ of $D$,
$$
D' = \begin{pmatrix}
1 & 0 & \cdots & 0 \\
* & 1 & \cdots & 0 \\
\vdots & \vdots & \ddots & \vdots \\
* & * & \cdots & 1
\end{pmatrix}.$$
Thus, $D_k = 1$, and there exists an $\mathrm{OA}(d^m, n, d, k)$ over $\mathbb{Z}_d$ constructed from $G$ by Construction~\ref{c}.\qed

\subsection{LU equivalence of OAs}
In this subsection, we establish sufficient conditions under which an OA constructed as in Section 2.1 is $LU$  equivalent to an OA of strength $1$.
We begin with the required definitions and notation.

An \emph{elementary row operation} over $\mathbb{Z}_d$ swaps two rows, multiplies a row by a unit, or adds a multiple of one row to another. Invertible row operations on a generator matrix permute the input vectors $x\in\mathbb{Z}_d^m$ and therefore preserve its generated array up to row order.

Two OAs $A$ and $A'$ are \emph{combinatorially equivalent},  if they differ by row permutations, column permutations, and independent symbol permutations within columns. Column permutations and multiplication of columns by units preserve this equivalence. In contrast, adding one column to another need not preserve the OA strength or combinatorial equivalence and is not allowed in this argument.

Row permutations do not change the associated state, and symbol permutations act by local permutation unitaries. A column permutation relabels the parties and need not itself be a local unitary. We therefore distinguish combinatorial equivalence from $LU$  equivalence. A common column permutation applied to both arrays preserves any established $LU$  relation and commutes with $U_d^{\otimes n}$.

We explicitly assume a \emph{systematic} generator matrix
\[
G=(I_m\mid G'),\qquad G'=(g'_{ij})_{m\times(n-m)},
\]
in the results below. Such a form is available, using column permutations and invertible row operations, if $G$ has an $m\times m$ minor that is a unit. Full row rank suffices over a field, but an unspecified rank condition over $\mathbb{Z}_d$ does not justify this conclusion for composite~$d$.

  Let $F_d=(F_d(k,l))$ be the Fourier matrix of order $d$ as follows

$$
F_d=\left(\begin{array}{ccccc}
1& 1&1 & \cdots &1\\
1  & \omega &\omega^2 &\cdots &\omega^{d-1} \\
1  & \omega^2 &\omega^4 &\cdots &\omega^{2(d-1)} \\
1  & \vdots &\vdots &\ddots &\vdots\\
1  & \omega^{d-1} &\omega^{2(d-1)}&\cdots &\omega^{(d-1)^2} \\
\end{array}\right), $$
\noindent where $F_d(k,l)= \omega^{kl}$, $k,l \in [d]$, $\omega=\mathrm{e}^{\frac{2\pi \mathrm{i}}{d}}$ is a primitive $d$-th root of unity.
Define
$$U_d = \frac{1}{\sqrt{d}}F_d= \frac{1}{\sqrt{d}} \sum_{l,b=0}^{d-1} \omega^{lb} |l\rangle\langle b|.$$

The matrix $U_d$ is unitary. We call two states \emph{$LU_F$ equivalent} if one is obtained from the other by a tensor product of integer powers of $U_d$. This is an equivalence relation contained in $LU$ equivalence. All Fourier partners constructed below are obtained by the specific transformation $U_d^{\otimes n}$. We do not claim that this single transformation produces every $LU_F$ equivalent array.

\begin{lemma}\label{l2}
Let $d,t\in\mathbb{N}^+$ with $d\geq 2$, and let $a_1, a_2, \dots, a_t\in \mathbb{Z}_d.$ If at least one coefficient $a_i\in \mathbb{Z}_d^*$, $i=1,2,\dots,t$, then
$$
\{a_1 x_1 + a_2 x_2 + \cdots + a_t x_t : x_1, x_2, \dots, x_t \in \mathbb{Z}_d\} = d^{t-1} \mathbb{Z}_d.
$$

\end{lemma}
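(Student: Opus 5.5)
The statement is to be read as a multiset identity: as $(x_1,\dots,x_t)$ ranges over all $d^t$ tuples in $\mathbb{Z}_d^t$, the value $a_1x_1+\cdots+a_tx_t$ takes each element of $\mathbb{Z}_d$ exactly $d^{t-1}$ times. The plan is a direct fibre-counting argument. One could instead invoke Lemma~\ref{l1} with the $t\times 1$ matrix $D=(a_1,\dots,a_t)^T$, since then $D_1=\gcd(a_1,\dots,a_t)$ is coprime to $d$ because some $a_i$ is a unit. I would give the elementary argument anyway, as it is self-contained.

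\emph{Reduction.} Without loss of generality (relabel the variables) assume $a_1\in\mathbb{Z}_d^*$, and let $a_1^{-1}$ be its inverse modulo $d$.

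\emph{Counting fibres.} Fix a target $c\in\mathbb{Z}_d$ and any choice of $(x_2,\dots,x_t)\in\mathbb{Z}_d^{t-1}$. The equation $a_1x_1\equiv c-\sum_{i\ge2}a_ix_i \pmod d$ then has exactly one solution,
\[
x_1=a_1^{-1}\Bigl(c-\sum_{i\ge2}a_ix_i\Bigr),
\]
because multiplication by the unit $a_1$ is a bijection of $\mathbb{Z}_d$. So the solutions of $\sum a_ix_i=c$ are in bijection with $\mathbb{Z}_d^{t-1}$, and there are $d^{t-1}$ of them for every $c$. This is exactly the claimed multiset equality. The case $t=1$ is included, since there the map $x_1\mapsto a_1x_1$ is itself a bijection.

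No step is a real obstacle. The only point needing care is interpretive: the set-builder notation must be read as a multiset, consistent with the paper's convention that $m\mathbb{Z}_d$ denotes $\mathbb{Z}_d$ with each element repeated $m$ times. Over composite $d$, the hypothesis that some single coefficient is a unit is what makes the unique solvability for $x_1$ work directly.
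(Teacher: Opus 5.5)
Your proof is correct and is essentially the paper's argument. The paper likewise assumes $a_1\in\mathbb{Z}_d^*$ and observes that, for each fixed $(x_2,\dots,x_t)$, the values $a_1x_1+\sum_{i\ge2}a_ix_i$ as $x_1$ varies form a translate of $a_1\mathbb{Z}_d=\mathbb{Z}_d$, hence all of $\mathbb{Z}_d$ once each. This is the same bijection you phrase as unique solvability for $x_1$, and your explicit multiset reading and fibre count make the paper's terse step precise.
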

\Proof Without loss of generality, suppose $a_1 \in \mathbb{Z}_d^*$,
then
\begin{eqnarray*}
\{a_1 x_1 : x_1 \in \mathbb{Z}_d\} =   \mathbb{Z}_d.
\end{eqnarray*}
Moreover, $x + \mathbb{Z}_d = \{x+a: a\in \mathbb{Z}_d\}= \mathbb{Z}_d$ for any $x \in \mathbb{Z}_d$. Therefore, the multiset\vspace{0.3cm}

\hspace{3cm} $\{a_1 x_1 + a_2 x_2 + \cdots + a_t x_t : x_1, x_2, \dots, x_t \in \mathbb{Z}_d\} = d^{t-1}  \mathbb{Z}_d.$
 \qed

\begin{theorem}\label{t1}
Let $d, m, n, k \in \mathbb{N}^+$ with $d\geq2$, $k \leq m < n$, and let $G = (I_m \mid G')$ be an $m \times n$ matrix over $\mathbb{Z}_d$. If $G$ satisfies

 $(1)$ $D_k \in \mathbb{Z}_d^*$ for every $m \times k$ submatrix $D$ of $G;$

 $(2)$ each row of $G'$ contains at least one element in $\mathbb{Z}_d^*$,

\noindent then $G$ generates an $\mathrm{OA}(d^m, n, d, k)$ over $\mathbb{Z}_d$. Furthermore, the $\mathrm{OA}(d^m, n, d, k)$ generated by $G$ is $LU_F$ equivalent to an $\mathrm{OA}(d^t, n, d, 1)$, where $t = n - m$.
\end{theorem}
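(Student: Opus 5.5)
First, condition (1) together with Construction~\ref{c} shows directly that the array $A=\{xG : x\in\mathbb{Z}_d^m\}$ is an $\mathrm{OA}(d^m,n,d,k)$. Because $G=(I_m\mid G')$, the map $x\mapsto xG$ is injective, so the rows of $A$ are distinct. The associated state is therefore
\[
|\psi\rangle=\frac{1}{\sqrt{d^m}}\sum_{x\in\mathbb{Z}_d^m}|x,\,xG'\rangle ,
\]
where the first $m$ parties carry $x$ and the last $t=n-m$ parties carry $xG'$.

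Next I apply $U_d^{\otimes n}$ and expand. Using $U_d|b\rangle=\frac{1}{\sqrt d}\sum_l\omega^{lb}|l\rangle$, I get
\[
U_d^{\otimes n}|\psi\rangle=\frac{1}{\sqrt{d^{m+n}}}\sum_{u\in\mathbb{Z}_d^m}\sum_{v\in\mathbb{Z}_d^{t}}\Bigl(\sum_{x\in\mathbb{Z}_d^m}\omega^{x\cdot(u+G'v^T)}\Bigr)|u,v\rangle .
\]
The inner character sum equals $d^m$ if $u\equiv -G'v^T \pmod d$ and $0$ otherwise, since a nonzero element of $\mathbb{Z}_d^m$ defines a nontrivial character on $\mathbb{Z}_d^m$. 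Hence
\[
U_d^{\otimes n}|\psi\rangle=\frac{1}{\sqrt{d^{t}}}\sum_{v\in\mathbb{Z}_d^t}|-vG'^T,\,v\rangle .
\]
This is the equal-amplitude state of the $d^t\times n$ array $B$ whose rows are $(-vG'^T\mid v)=vH$ with $H=(-G'^T\mid I_t)$. The rows of $B$ are distinct because of the identity block. So $|\psi\rangle$ is $LU_F$ equivalent, via $U_d^{\otimes n}$, to the state of $B$, and all that remains is to check that $B$ has strength $1$.

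For the last $t$ columns, each single column is one coordinate $v_j$, which takes every value $d^{t-1}$ times. For column $i\le m$, the entry is $-\sum_j g'_{ij}v_j$. By condition (2), row $i$ of $G'$ contains a unit, so by Lemma~\ref{l2} this linear form takes each value of $\mathbb{Z}_d$ exactly $d^{t-1}$ times. Therefore $B$ is an $\mathrm{OA}(d^t,n,d,1)$.

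I expect the only delicate step to be the character-sum evaluation over $\mathbb{Z}_d$ for composite $d$. It relies on the fact that if $w\neq 0$ in $\mathbb{Z}_d^m$, some coordinate $w_i\neq 0$, and then $\sum_{x_i}\omega^{x_iw_i}=0$. This holds because $\omega^{w_i}$ is a root of unity different from $1$, and it makes the sum vanish without needing $w_i$ to be a unit. Equivalently, Construction~\ref{c} with $k=1$ applied to $H$ gives the strength-$1$ conclusion: a $t\times 1$ column has $D_1$ a unit exactly when condition (2) holds for the corresponding row of $G'$.
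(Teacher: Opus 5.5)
Your proposal is correct and follows essentially the same route as the paper: distinct rows from the identity block, then the character-sum evaluation of $U_d^{\otimes n}|\psi_A\rangle$ giving the equal-amplitude state of the array generated by $H=(-(G')^T\mid I_t)$, and finally Lemma~\ref{l2} with Condition~(2) for strength $1$. One small correction to your closing aside, which your main argument does not depend on: for composite $d$, Condition~(2) implies that the corresponding column of $H$ has $D_1\in\mathbb{Z}_d^*$, but the converse fails, so the two are not equivalent (e.g.\ a row $(2,3)$ of $G'$ over $\mathbb{Z}_6$ contains no unit, yet its column in $H$ has $D_1\in\mathbb{Z}_6^*$).
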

\Proof Put $t=n-m$ and let $A$ have rows $xG=(x,xG')$, where $x\in\mathbb{Z}_d^m$. Condition~(1) and Construction~\ref{c} give strength $k$, and the identity block ensures that the $d^m$ rows are distinct. Hence
\[
|\psi_A\rangle=d^{-m/2}\sum_{x\in\mathbb{Z}_d^m}|x,xG'\rangle.
\]
Write a vector in $\mathbb{Z}_d^n$ as $(a,b)$, with $a\in\mathbb{Z}_d^m$ and $b\in\mathbb{Z}_d^t$. Character orthogonality gives
\[
\sum_{v\in\mathbb{Z}_d}\omega^{uv}=
\begin{cases}d,&u=0\text{ in }\mathbb{Z}_d,\\0,&u\neq0\text{ in }\mathbb{Z}_d.\end{cases}
\]
Consequently, with all vector arithmetic performed modulo $d$,
\begin{align*}
U_d^{\otimes n}|\psi_A\rangle
&=d^{-(m+n)/2}\sum_{a\in\mathbb{Z}_d^m}\sum_{b\in\mathbb{Z}_d^t}
  \left(\sum_{x\in\mathbb{Z}_d^m}\omega^{x(a^T+G'b^T)}\right)|a,b\rangle\\
&=d^{-t/2}\sum_{b\in\mathbb{Z}_d^t}|-b(G')^T,b\rangle.
\end{align*}
The last expression is the normalized state of the array $A'$ generated by
\[
H=(-(G')^T\mid I_t).
\]
Its rows are distinct because of the identity block. Each of its last $t$ columns contains every symbol exactly $d^{t-1}$ times. For each of its first $m$ columns, Condition~(2) and Lemma~\ref{l2} give the same multiplicity. Thus $A'$ is an $\mathrm{OA}(d^t,n,d,1)$, and the displayed identity proves the asserted $LU_F$ equivalence.\qed

\noindent \textbf{Remark} For convenience, we refer to Conditions $(1)$ and $(2)$ in Theorem~\ref{t1}, for a specified strength $k$, as the $LU_F$ \emph{conditions at strength $k$}. When no strength is specified, the conditions are assumed to hold for some $k\geq1$. For a nonsystematic matrix, this terminology means that the conditions hold after the admissible conversion to systematic form described above.

\begin{example}\label{e1}
There exists an $\mathrm{OA}(16,6,2,3)$, which is $LU_F$ equivalent to an  $\mathrm{OA}(4,6,2,1)$.
\end{example}

\Proof Consider the following $4\times 6$ matrix over $\mathbb{Z}_2$:
$$
G = \begin{pmatrix}
1 & 0 & 0 & 0 & 1 & 1 \\
0 & 1 & 0 & 0 & 1 & 1 \\
0 & 0 & 1 & 0 & 0 & 1 \\
0 & 0 & 0 & 1 & 1 & 0
\end{pmatrix}=(I_4\mid G').$$

It can be checked that for any $4 \times 3$ submatrix $D$ of $G$, we have $D_3 \equiv 1 \pmod{2}$, which implies $D_3 \in \mathbb{Z}_2^*$. In addition, each row of $G'$ contains at least one nonzero element. Thus $G$ satisfies the $LU_F$ conditions and generates an $\mathrm{OA}(16, 6, 2, 3)$. Next, we prove that this $\mathrm{OA}$ is $LU_F$ equivalent to an $\mathrm{OA}(4, 6, 2, 1)$.

Let $C$ be a $2^4 \times 4$ matrix whose rows consist of all $4$-tuples over $\mathbb{Z}_2$.
Set $A=CG$. Then
$$
A = \left( \begin{array}{cccccccccccccccc}
0 & 0 & 0 & 0 & 0 & 0 & 0 & 0 & 1 & 1 & 1 & 1 & 1 & 1 & 1 & 1\\
0 & 0 & 0 & 0 & 1 & 1 & 1 & 1 & 0 & 0 & 0 & 0 & 1 & 1 & 1 & 1\\
0 & 0 & 1 & 1 & 0 & 0 & 1 & 1 & 0 & 0 & 1 & 1 & 0 & 0 & 1 & 1\\
0 & 1 & 0 & 1 & 0 & 1 & 0 & 1 & 0 & 1 & 0 & 1 & 0 & 1 & 0 & 1\\
0 & 1 & 0 & 1 & 1 & 0 & 1 & 0 & 1 & 0 & 1 & 0 & 0 & 1 & 0 & 1\\
0 & 0 & 1 & 1 & 1 & 1 & 0 & 0 & 1 & 1 & 0 & 0 & 0 & 0 & 1 & 1
\end{array} \right)^{T}
$$
  is an $\mathrm{OA}(16, 6, 2, 3)$.

The quantum state $|\psi\rangle$ corresponding to the array $A$ is
\begin{align}
\qquad|\psi\rangle = & \frac{1}{4}(| 000000 \rangle + | 000110 \rangle + | 001001 \rangle + | 001111 \rangle \nonumber\\
&+ | 010011 \rangle+ | 010101 \rangle + | 011010 \rangle + | 011100 \rangle \nonumber\\
& +| 100011 \rangle + | 100101 \rangle + | 101010 \rangle + | 101100 \rangle\nonumber\\
& + | 110000 \rangle + | 110110 \rangle + | 111001 \rangle + | 111111 \rangle).\nonumber
\end{align}

For the matrix $U_2$ defined above, $U_2 | 0 \rangle = \frac{1}{\sqrt{2}}(|0 \rangle + |1 \rangle)$,  $U_2 | 1 \rangle = \frac{1}{\sqrt{2}}(|0 \rangle - |1 \rangle)$.
We have

\begin{align}
\qquad|\phi\rangle &= U_2^{\otimes 6} |\psi\rangle\nonumber\\
&=\frac{1}{32}\sum_{i_1,i_2,\ldots,i_6=0}^{1}\Big[1+(-1)^{i_4+i_5}+(-1)^{i_3+i_6}+(-1)^{i_3+i_4+i_5+i_6}\nonumber\\
& +(-1)^{i_2+i_5+i_6}+(-1)^{i_2+i_4+i_6}+(-1)^{i_2+i_3+i_5}+(-1)^{i_2+i_3+i_4}\nonumber\\
&+(-1)^{i_1+i_5+i_6}+(-1)^{i_1+i_4+i_6}+(-1)^{i_1+i_3+i_5}+(-1)^{i_1+i_3+i_4}\nonumber\\
&+(-1)^{i_1+i_2}+(-1)^{i_1+i_2+i_4+i_5}+(-1)^{i_1+i_2+i_3+i_6}+(-1)^{\sum_{j=1}^{6}i_j}\Big]|i_1, i_2, \ldots, i_6\rangle.\nonumber
\end{align}
Denote the coefficient of $|i_1, i_2, \ldots, i_6\rangle$ by $\phi_{i_1,i_2,\ldots,i_6}$. Then
\begin{eqnarray*}
\phi_{i_1, i_2, \ldots, i_6} = \frac{1}{4}\delta_{i_1,i_2}\delta_{i_4,i_5}\delta_{i_3,i_6}[1+(-1)^{i_1+i_3+i_4}].
\end{eqnarray*}
Thus, the coefficient $\phi_{i_1, i_2, \ldots, i_6}$ is nonzero only when all three conditions $i_1 = i_2$, $i_4 = i_5$ and $i_3 = i_6$ are simultaneously satisfied. In this case, $\phi_{i_1, i_2, \ldots, i_6}=\frac{1}{4}[1+(-1)^{i_1+i_3+i_4}]$.
If the triple $(i_1,i_3,i_4)$ contains exactly one or three zeros, then $\phi_{i_1,i_2,\ldots,i_6}=\frac{1}{2}$; otherwise, this coefficient is $0$.

Consequently, $|\phi\rangle = \frac{1}{2}(| 000000 \rangle +| 001111 \rangle  + | 110110 \rangle +| 111001 \rangle )$.
The quantum state $|\phi\rangle$ corresponds to the array
$$
A' = \begin{pmatrix}
0 & 0 & 0 & 0 & 0 & 0 \\
0 & 0 & 1 & 1 & 1 & 1 \\
1 & 1 & 0 & 1 & 1 & 0 \\
1 & 1 & 1 & 0 & 0 & 1
\end{pmatrix}.$$

It follows that $A'$ is an $\mathrm{OA}(4, 6, 2, 1)$. Furthermore, since the quantum states $|\phi \rangle$ and $| \psi\rangle$ are $LU_F$ equivalent, the corresponding OAs $A'$ and $A$ are $LU_F$ equivalent as well.\qed

\begin{corollary}\label{t2}
Let $ d, m\in \mathbb{N}^+$ with $d\geq 2 $, and let $G = (I_m \mid \alpha)$ be an $m \times (m+1)$ matrix over $\mathbb{Z}_d$, where $ \alpha = (g_1, g_2, \dots, g_m)^T$, $g_i \in \mathbb{Z}_d^*$ for $i=1,2,\dots,m$.
 Then $G$ generates an $\mathrm{OA}(d^m, m+1, d, m)$ over $\mathbb{Z}_d$, and it is $LU_F$ equivalent to an $\mathrm{OA}(d, m+1, d, 1)$.
\end{corollary}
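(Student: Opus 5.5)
The plan is to apply Theorem~\ref{t1} with $n=m+1$, $t=1$ and $k=m$. Both of its hypotheses must be checked for $G=(I_m\mid\alpha)$; the conclusion then gives directly an $\mathrm{OA}(d^m,m+1,d,m)$ that is $LU_F$ equivalent to an $\mathrm{OA}(d^1,m+1,d,1)$.

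Condition~(2) is immediate. The matrix $G'=\alpha$ is a single column, so the $i$-th row of $G'$ is the single entry $g_i$, and $g_i\in\mathbb{Z}_d^*$ by hypothesis.

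For Condition~(1), I will take an arbitrary $m\times m$ submatrix $D$ of $G$, obtained by deleting one column. There are two cases.

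\emph{Case 1: the deleted column is the last one.} Then $D=I_m$, so $D_m=1$.

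\emph{Case 2: the deleted column is column $j$ of $I_m$, for some $1\le j\le m$.} Then $D$ consists of the columns $e_i$ with $i\neq j$, together with $\alpha$. Expanding the determinant along row $j$, whose only nonzero entry is $g_j$ in the $\alpha$-column, gives $|D|=\pm g_j$. The remaining block is the identity on the other rows. Since $g_j$ is a unit, the single $m\times m$ minor $D_m=|g_j|$ is coprime to $d$, so $D_m\in\mathbb{Z}_d^*$. This holds for any integer lift, because the lifted determinant is still $\pm g_j$ modulo $d$.

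Hence Condition~(1) holds at strength $k=m$. Construction~\ref{c}, as invoked inside Theorem~\ref{t1}, then gives the OA of strength $m$.

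There is no real obstacle; the only point needing care is the sign and cofactor computation in Case 2. For completeness I will also record the partner array explicitly. By the proof of Theorem~\ref{t1}, it is generated by $H=(-\alpha^T\mid 1)$, that is, by the rows $(-bg_1,\dots,-bg_m,b)$ for $b\in\mathbb{Z}_d$. Each column of this array is a permutation of $\mathbb{Z}_d$, because each $g_i$ is a unit. So it is an $\mathrm{OA}(d,m+1,d,1)$ with distinct rows, matching the statement.
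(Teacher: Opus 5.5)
Your proposal is correct and follows the paper's proof: both verify Conditions (1) and (2) of Theorem~\ref{t1}, using that every $m\times m$ submatrix of $G$ has determinant $1$ or $\pm g_j$ (a unit) and that each row of $G'=\alpha$ is a unit. Your explicit partner generated by $(-\alpha^T\mid 1)$ also agrees with the remark the paper gives after the corollary.
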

\Proof Let $D$ be an $m \times m$ submatrix of $G$. Then the possible determinant values of $D$ are $1, g_i, -g_i$ for $i = 1,2,\ldots,m$, which implies that $D_m \in \mathbb{Z}_d^*$ as $1, g_i \in \mathbb{Z}_d^*$. Thus, $G$ satisfies the $LU_F$ conditions.
By Theorem~\ref{t1}, $G$ generates an $\mathrm{OA}(d^m, m+1, d, m)$, which is $LU_F$ equivalent to an $\mathrm{OA}(d, m+1, d, 1)$.\qed

\noindent \textbf{Remark} ~For $d \geq 2$, there exist $\phi(d)^m$ matrices satisfying the conditions in Corollary~\ref{t2}, where $\phi$ is Euler's totient function, i.e., $\phi(d) = |\mathbb{Z}_d^*|$. This is because the vector $(g_1, g_2, \dots, g_m)^T$ can be chosen in exactly $\phi(d)^m$ ways with $g_i \in \mathbb{Z}_d^*$ for each $i = 1, 2, \dots, m$. Specifically, when $g_1 = \cdots = g_m = 1$, the matrix $G$ can be written as
$$
G = \begin{pmatrix}
1  & & & & 1 \\
& \ddots & & & \vdots \\
& & 1 & & 1 \\
\end{pmatrix}
.$$
Further,
$G$ generates an $\mathrm{OA}(d^m, m+1, d, m)$, which is $LU_F$ equivalent to an OA$(d, m+1, d, 1)$ given below:
\begin{equation*}
\begin{pmatrix}
0 & \cdots & 0 & 0 \\
1 & \cdots & 1 & d-1 \\
\vdots & \ddots & \vdots & \vdots \\
d-1 & \cdots & d-1 & 1 \\
\end{pmatrix}.
\end{equation*}
This result is consistent with Theorem 3.2 in \cite{Du}.

\begin{theorem}\label{t3}
Let $d\geq2$, $1\leq m<n$, and let $G=(I_m\mid G')$ satisfy the $LU_F$ conditions over $\mathbb{Z}_d$. Put $t=n-m$, and let $A$ be generated by $G$. The state $U_d^{\otimes n}|\psi_A\rangle$ is precisely the normalized equal-amplitude state of the array generated by
\[
H=(-(G')^T\mid I_t).
\]
Among arrays with distinct rows, this Fourier image determines its array uniquely up to row order. In particular, the arrays generated by $G$ and $H$ are $LU_F$ equivalent.
\end{theorem}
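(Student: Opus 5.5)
The plan is to reuse the Fourier computation from the proof of Theorem~\ref{t1}, and then add a short uniqueness argument for the array determined by a given equal-amplitude state. The first step is the identity
\[
U_d^{\otimes n}|\psi_A\rangle=d^{-t/2}\sum_{b\in\mathbb{Z}_d^t}|-b(G')^T,b\rangle .
\]
Write $U_d^{\otimes n}|x,xG'\rangle=d^{-n/2}\sum_{(a,b)}\omega^{xa^T+xG'b^T}|a,b\rangle$ and sum over $x\in\mathbb{Z}_d^m$. Character orthogonality, applied coordinatewise to $x$, forces $a^T+G'b^T\equiv 0\pmod d$, i.e.\ $a=-b(G')^T$. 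Each surviving term receives the factor $d^m$, so the overall constant is $d^{-m/2}d^{-n/2}d^m=d^{-t/2}$. Here I must be careful that the exponent $x(a^T+G'b^T)$ is read modulo $d$, which is legitimate since $\omega^d=1$. The rows of the array $A'$ generated by $H=(-(G')^T\mid I_t)$ are exactly $bH=(-b(G')^T,b)$ for $b\in\mathbb{Z}_d^t$. They are pairwise distinct because of the $I_t$ block, so there are $d^t$ of them. Hence the right-hand side is precisely the normalized equal-amplitude state of $A'$.

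The second step is the uniqueness claim. Let $B$ be any array with $r$ distinct rows whose equal-amplitude state equals $U_d^{\otimes n}|\psi_A\rangle$. Then
\[
r^{-1/2}\sum_{\beta\in\text{rows}(B)}|\beta\rangle=d^{-t/2}\sum_{b}|bH\rangle .
\]
The computational basis vectors $|s\rangle$, $s\in\mathbb{Z}_d^n$, are linearly independent. So the support of each side is its row set, and the coefficients must agree. This gives $\text{rows}(B)=\{bH\}$ as sets and $r=d^t$. Since an array with distinct rows is determined up to row order by its set of rows, $B$ equals $A'$ up to a row permutation.

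For the final clause, both $\psi_A$ (distinct rows via $I_m$) and $\psi_{A'}$ are legitimate states in the sense of the paper's definition. The identity above exhibits $U_d^{\otimes n}$, a tensor product of integer powers of $U_d$, mapping one state to the other. So $A$ and $A'$ are $LU_F$ equivalent. For symmetry, the inverse transformation $(U_d^{-1})^{\otimes n}=(U_d^{3})^{\otimes n}$ is also of the allowed form, using $U_d^4=I$. If one wants the OA parameters, Theorem~\ref{t1} already shows that $A'$ is an OA$(d^t,n,d,1)$ under condition~(2).

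The main obstacle is bookkeeping rather than new mathematics. The sign and transpose convention in $a=-b(G')^T$ must be stated correctly, and the normalization constant must be tracked exactly. Everything else follows from Theorem~\ref{t1}'s computation and the linear independence of basis states.
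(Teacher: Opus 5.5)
Your proposal is correct and follows essentially the same route as the paper. The paper's proof cites the character-sum computation from the proof of Theorem~\ref{t1}, which you reproduce with the correct constant $d^{-t/2}$ and support $\{(-b(G')^T,b): b\in\mathbb{Z}_d^t\}$; it then observes, as your linear-independence argument does, that the support of a positive equal-amplitude state determines its distinct-row array up to row order. Your remark that $U_d^4=I$ makes the relation symmetric is a harmless addition.
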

\Proof The proof of Theorem~\ref{t1} shows that the Fourier image has coefficient $d^{-t/2}$ exactly on the vectors $(-b(G')^T,b)$, with $b\in\mathbb{Z}_d^t$, and coefficient zero elsewhere. These are precisely the distinct rows generated by $H$. The nonzero support of a positive equal-amplitude state determines its distinct-row array up to row order. This identifies the partner for $U_d^{\otimes n}$, not all possible partners under general $LU$  or $LU_F$ equivalence.\qed

\begin{theorem}\label{l3}
Let $m, k \in \mathbb{N}^+$ with $k\leq m$, and let $G = (I_m \mid G')$ be an $m \times 2m$ matrix over $\mathbb{Z}_2$ satisfying the $LU_F$ conditions at strength $k$.
If $G'= (G')^T$, then the $\mathrm{OA}(2^m, 2m, 2, k)$ generated by $G$ is $LU_F$ equivalent to an $\mathrm{OA}(2^m, 2m, 2, k)$. Furthermore, the two arrays are combinatorially equivalent.
\end{theorem}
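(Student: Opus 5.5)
Apply Theorem~\ref{t3} with $n=2m$, $t=m$. The Fourier image $U_2^{\otimes 2m}|\psi_A\rangle$ is the equal-amplitude state of the array $A'$ generated by $H=(-(G')^T\mid I_m)$. Over $\mathbb{Z}_2$ we have $-1=1$, and by hypothesis $(G')^T=G'$, so $H=(G'\mid I_m)$. Thus $A$ and $A'$ are $LU_F$ equivalent by Theorem~\ref{t3}. It remains to show that $A'$ has strength $k$ and is combinatorially equivalent to $A$. Both follow from a single column permutation.

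Let $\sigma$ be the column permutation exchanging the first block of $m$ columns with the last block, that is, $j\mapsto j+m$ and $j+m\mapsto j$ for $1\le j\le m$. Applying $\sigma$ to the columns of $H=(G'\mid I_m)$ gives $(I_m\mid G')=G$. Permuting the columns of a generator matrix permutes the columns of the generated array in the same way. Indeed, the row of $A'$ indexed by $b\in\mathbb{Z}_2^m$ is $(bG',b)$, and after $\sigma$ it becomes $(b,bG')$, which is exactly the row of $A$ indexed by $x=b$. So $A'$ with columns permuted by $\sigma$ equals $A$ row for row. Hence $A'$ is obtained from $A$ by a column permutation and is combinatorially equivalent to $A$.

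Strength is invariant under column permutation, since the set of $k$-column subarrays is merely relabelled. Therefore $A'$ is an $\mathrm{OA}(2^m,2m,2,k)$, because $A$ is one by Condition~(1) and Construction~\ref{c}. Both arrays have $2^m$ distinct rows because of their identity blocks. This completes the argument. The only delicate point is interpretive rather than technical. The $LU_F$ relation is the fixed-party transformation $U_2^{\otimes 2m}$ supplied by Theorem~\ref{t3}, while $\sigma$ is used only to establish combinatorial equivalence and the OA parameters of $A'$, not as a local unitary. The paper distinguishes these two notions explicitly, and the proof keeps them separate in the same way.
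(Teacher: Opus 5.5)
Your proof is correct and follows essentially the same route as the paper. Theorem~\ref{t3} gives the Fourier partner generated by $H=(-(G')^T\mid I_m)=(G'\mid I_m)$ over $\mathbb{Z}_2$, and the block-swapping column permutation carries $H$ to $G$, which yields both the combinatorial equivalence and strength $k$ for the partner; your explicit row identification $(bG',b)\mapsto(b,bG')$ simply makes precise the paper's remark that $G$ and $H$ are equivalent under column permutations.
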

\Proof Let $A$ be the $\mathrm{OA}(2^m, 2m, 2, k)$ generated by $G$.
By Theorem~\ref{t1}, there exists an OA$(2^m, 2m, $ $2, 1)$, denoted by $A'$, which is $LU_F$ equivalent to $A$.

By Theorem~\ref{t3}, $A'$ has a generator matrix $H =(-(G')^T \mid I_m)$.
Since $G'=(G')^T=-(G')^T$, the matrices $G$ and $H$ are equivalent under column permutations.  Therefore, the OAs $A$ and $A'$ are equivalent. Further, $A'$ is also an $\mathrm{OA}(2^m, 2m, 2, k)$.\qed

\begin{theorem}\label{l4}
Let $d, m, k \in \mathbb{N}^+$ with $k\leq m$, and let $G = (I_m \mid G')$ be an $m \times 2m$ matrix over $\mathbb{Z}_d$, with $d\geq2$, satisfying the $LU_F$ conditions at strength $k$. If $G'$ is invertible over $\mathbb{Z}_d$ and $(G')^{-1} = -(G')^T$,
then the $\mathrm{OA}(d^m, 2m, d, k)$ generated by $G$ is $LU_F$ equivalent to an $\mathrm{OA}(d^m, 2m, d, k)$. Furthermore, the two arrays are combinatorially equivalent.
\end{theorem}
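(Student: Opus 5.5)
We follow the pattern of Theorem~\ref{l3}, this time using invertible row operations in place of the symmetry $G'=(G')^T$. Let $A$ be the $\mathrm{OA}(d^m,2m,d,k)$ generated by $G=(I_m\mid G')$, so here $t=n-m=m$. By Theorem~\ref{t1} and Theorem~\ref{t3}, the state $U_d^{\otimes 2m}|\psi_A\rangle$ is the normalized equal-amplitude state of the array $A'$ generated by
\[
H=(-(G')^T\mid I_m).
\]
Hence $A$ and $A'$ are $LU_F$ equivalent. It remains to show that $A'$ is combinatorially equivalent to $A$; it then inherits strength $k$.

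The key step is to rewrite $H$ using the hypothesis $(G')^{-1}=-(G')^T$:
\[
H=((G')^{-1}\mid I_m).
\]
The matrix $G'$ is invertible over $\mathbb{Z}_d$, so left multiplication by $G'$ is a composition of invertible row operations on $m\times 2m$ generator matrices. (Over $\mathbb{Z}_d$ an invertible matrix need not factor into elementary matrices in every case, but we only need that $x\mapsto xG'$ is a bijection of $\mathbb{Z}_d^m$.) Such a multiplication permutes the input vectors and so preserves the generated array up to row order. It gives
\[
G'H=(I_m\mid G').
\]
Concretely, the rows of $A'$ are $y H$ for $y\in\mathbb{Z}_d^m$. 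Substituting $y=xG'$, which is a bijection of $\mathbb{Z}_d^m$, turns the rows into $xG'H=(x,xG')=xG$. Therefore $A'$ coincides with $A$ up to row order. In particular the two arrays are combinatorially equivalent (even equal as row multisets), and $A'$ is an $\mathrm{OA}(d^m,2m,d,k)$.

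The only delicate point is to get the blocks in the right order. $H$ has $(G')^{-1}$ in the first $m$ columns and $I_m$ in the last $m$. After multiplying by $G'$ the result is $(I_m\mid G')$ exactly, with no column permutation needed. This is unlike the binary symmetric case, where a swap of the two blocks appeared. I will also note that the identity blocks ensure distinct rows, so the state-to-array identification of Theorem~\ref{t3} applies.
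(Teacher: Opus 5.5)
Your proof is correct and is essentially the paper's argument run in the opposite direction. The paper observes $(G')^{-1}G=((G')^{-1}\mid I_m)=H$, whereas you observe $G'H=(I_m\mid G')=G$; in both cases the bijective reparametrization $y=xG'$ of $\mathbb{Z}_d^m$ shows that $G$ and $H$ generate the same row set, so the Fourier partner coincides with $A$ up to row order. (Your parenthetical caveat is unnecessary, since over the finite ring $\mathbb{Z}_d$ every invertible matrix is a product of elementary row operations in the paper's sense, but as you note only the bijection is used.)
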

\Proof Let $A$ be generated by $G$. By Theorem~\ref{t3}, its Fourier partner is generated by $H=(-(G')^T\mid I_m)$. The hypothesis gives
\[
(G')^{-1}G=((G')^{-1}\mid I_m)=H.
\]
Since left multiplication by $(G')^{-1}$ bijectively reparametrizes the input vectors, $G$ and $H$ generate exactly the same row set. Thus the arrays agree up to row order, and their common state is fixed by $U_d^{\otimes2m}$. In particular, both have strength $k$.\qed

\begin{lemma}\label{t4}
For $i=1,2$, let $G_i=(I_{m_i}\mid G_i')$ be an $m_i\times n_i$ matrix over $\mathbb{Z}_d$ satisfying the $LU_F$ conditions at strength $k_i$, and put $t_i=n_i-m_i$. Let $A_i$ and $A_i'$ be generated by $G_i$ and $H_i=(-(G_i')^T\mid I_{t_i})$, respectively. Then
\[
G=\begin{pmatrix}G_1&0\\0&G_2\end{pmatrix}
\]
satisfies the $LU_F$ conditions at every strength $1\leq k\leq\min\{k_1,k_2\}$ after reordering its coordinate blocks to obtain systematic form. In the original coordinate order, the array generated by $G$ has Fourier partner generated by
\[
H=\begin{pmatrix}H_1&0\\0&H_2\end{pmatrix}.
\]
\end{lemma}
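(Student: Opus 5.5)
Set $m=m_1+m_2$ and $t=t_1+t_2$, and let $P$ be the column permutation that moves the coordinate blocks into the order (identity block of $G_1$, identity block of $G_2$, $G_1'$-columns, $G_2'$-columns). Then
\[
GP=\left(I_m\;\middle|\;G'\right),\qquad G'=\begin{pmatrix}G_1'&0\\0&G_2'\end{pmatrix},
\]
which is systematic. The proof splits into two parts: verify the two $LU_F$ conditions for $GP$, then compute the Fourier partner in the original coordinate order.

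\textbf{Condition (2).} A row of $G'$ is either a row of $G_1'$ padded with zeros or a row of $G_2'$ padded with zeros. It therefore contains a unit because the corresponding row of $G_i'$ does.

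\textbf{Condition (1).} Fix $1\le k\le\min\{k_1,k_2\}$. The aim is to show that the $k\times k$ minors of every $m\times k$ submatrix $D$ of $G$ (or of $GP$, which has the same column sets) generate the unit ideal. By Lemma~\ref{l1} this is equivalent to the map $x\mapsto xD$ hitting every $y\in\mathbb{Z}_d^k$ exactly $d^{m-k}$ times. I will use this counting form.

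Write $x=(x_1,x_2)$ and suppose $D$ uses $k'$ columns from the $G_1$ block and $k''=k-k'$ columns from the $G_2$ block. Then $xD=(x_1D^{(1)},x_2D^{(2)})$, where $D^{(i)}$ are submatrices of $G_i$. Since $k',k''\le k\le k_i$ and the OA of strength $k_i$ also has every smaller strength, Construction~\ref{c} and Lemma~\ref{l1} give uniform counts $d^{m_1-k'}$ and $d^{m_2-k''}$ for the two factors. Multiplying gives the uniform count $d^{m-k}$. The degenerate cases $k'=0$ or $k''=0$ are handled the same way, with the unrestricted factor contributing $d^{m_i}$. Lemma~\ref{l1} then yields $D_k\in\mathbb{Z}_d^*$.

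This step is the main obstacle. Passing through Lemma~\ref{l1} avoids a direct Laplace-expansion argument about gcds of block minors over composite $d$.

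\textbf{Fourier partner.} By Theorem~\ref{t3}, the Fourier partner of the array generated by $GP$ is generated by $(-(G')^T\mid I_t)$. Since $G'$ is block diagonal, $-(G')^T$ is block diagonal with blocks $-(G_i')^T$. Reversing $P$ places the columns back in the order of $H_1$ followed by $H_2$, giving $\mathrm{diag}(H_1,H_2)$ up to the block-permuted rows, which do not affect the generated array.

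A cleaner justification works directly in the original order. The state of $G$ factors as $|\psi_{A_1}\rangle\otimes|\psi_{A_2}\rangle$, so
\[
U_d^{\otimes n}|\psi_A\rangle=\bigl(U_d^{\otimes n_1}|\psi_{A_1}\rangle\bigr)\otimes\bigl(U_d^{\otimes n_2}|\psi_{A_2}\rangle\bigr).
\]
By Theorem~\ref{t3}, this equals $|\psi_{A_1'}\rangle\otimes|\psi_{A_2'}\rangle$, which is exactly the state of the array generated by $H=\mathrm{diag}(H_1,H_2)$. Rows of this array are distinct thanks to the identity blocks, so by Theorem~\ref{t3} the array is determined up to row order. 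I would present this tensor argument as the main proof of the second claim, with the permutation argument as a consistency check.
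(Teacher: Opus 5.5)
Your proof is correct. The verification of the $LU_F$ conditions is essentially the paper's argument. The paper also counts preimages as $d^{m_1-a}d^{m_2-b}=d^{m_1+m_2-k}$ for a selection of $a$ columns from the first block and $b$ from the second. It then reads off Condition~(2) from the systematic form $\bigl(I_{m_1+m_2}\mid \mathrm{diag}(G_1',G_2')\bigr)$. You merely route the count through Lemma~\ref{l1} for each submatrix, where the paper uses Construction~\ref{c} for the whole array.

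For the Fourier partner, your main argument takes a different route. The paper applies Theorem~\ref{t3} once to the combined systematic matrix and obtains $\begin{pmatrix}-(G_1')^T&0&I_{t_1}&0\\0&-(G_2')^T&0&I_{t_2}\end{pmatrix}$. It then undoes the column permutation, which gives exactly $\mathrm{diag}(H_1,H_2)$ with no row reordering. To return to the original order it notes that $U_d^{\otimes(n_1+n_2)}$ commutes with a common coordinate permutation. Your ``consistency check'' version leaves that commutation step implicit.

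Your tensor argument instead factors $|\psi_A\rangle=|\psi_{A_1}\rangle\otimes|\psi_{A_2}\rangle$. This is the content of Theorem~\ref{c1}(1); it is elementary and not circular here. You then apply Theorem~\ref{t3} to each block separately. This works directly in the original coordinate order and needs no permutation bookkeeping. It also decouples the two claims: identifying the partner as $\mathrm{diag}(H_1,H_2)$ does not depend on the combined matrix satisfying the $LU_F$ conditions. The paper's route reuses the combined systematic form it needs anyway for the first claim. It thereby exhibits $H$ literally as the Theorem~\ref{t3} partner of $G$ after reordering.
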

\Proof See Appendix~\ref{app-A}.\qed

The corresponding upper-triangular construction also preserves the stated range of strengths.
\begin{lemma}\label{l13}
Under the hypotheses on $G_1$ and $G_2$ in Lemma~$\ref{t4}$, let $B$ be any $m_1\times n_2$ matrix over $\mathbb{Z}_d$. Then
\[
G=\begin{pmatrix}G_1&B\\0&G_2\end{pmatrix}
\]
can be put in systematic form satisfying the $LU_F$ conditions at every strength $1\leq k\leq\min\{k_1,k_2\}$.
\end{lemma}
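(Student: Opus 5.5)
The plan is to first put $G$ into systematic form by an explicit invertible row operation and a column permutation. Then Condition~(2) can be read off directly, and Condition~(1) follows from a direct counting argument on the generated array combined with the ``only if'' direction of Construction~\ref{c}.

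\emph{Systematic form.} Write $G_1=(I_{m_1}\mid G_1')$, $G_2=(I_{m_2}\mid G_2')$, and split $B=(B_1\mid B_2)$ with $B_1$ of size $m_1\times m_2$ and $B_2$ of size $m_1\times t_2$. Then
\[
G=\begin{pmatrix} I_{m_1} & G_1' & B_1 & B_2\\ 0 & 0 & I_{m_2} & G_2'\end{pmatrix}.
\]
I will left-multiply by the invertible block matrix $\begin{pmatrix} I_{m_1} & -B_1\\ 0 & I_{m_2}\end{pmatrix}$. This sends the first block row to $(I_{m_1}, G_1', 0, B_2-B_1G_2')$, and, as noted before Theorem~\ref{t1}, it only permutes the input vectors, so the generated array is unchanged up to row order. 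Moving the $m_2$ identity columns next to the first identity block then gives
\[
\widetilde G=\left(I_{m_1+m_2}\;\middle|\;\begin{matrix} G_1' & B_2-B_1G_2'\\ 0 & G_2'\end{matrix}\right).
\]
Condition~(2) is immediate for $\widetilde G$: each of the first $m_1$ rows of its right block contains a row of $G_1'$, which has a unit entry by hypothesis, and each of the last $m_2$ rows contains a row of $G_2'$, which likewise has a unit entry.

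\emph{Condition (1).} Rather than manipulating gcds of minors directly, which is awkward over composite $d$, I will show that the array $A$ generated by $G$ (equivalently by $\widetilde G$, up to row and column permutations) has strength $k$ for each $1\le k\le\min\{k_1,k_2\}$. The ``only if'' direction of Construction~\ref{c} then yields $D_k\in\mathbb{Z}_d^*$ for every $(m_1+m_2)\times k$ submatrix of $\widetilde G$; note that $k\le m_1+m_2$ automatically. The rows of $A$ are
\[
(x_1G_1,\; x_1B+x_2G_2),\qquad (x_1,x_2)\in\mathbb{Z}_d^{m_1}\times\mathbb{Z}_d^{m_2}.
\]
Fix $k$ columns, split as a set $S_1$ in the first $n_1$ coordinates and a set $S_2$ in the last $n_2$, with $|S_1|+|S_2|=k$. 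For a target $(y_1,y_2)$, I count pairs $(x_1,x_2)$ in two stages.
\begin{itemize}
\item Since $|S_1|\le k\le k_1$ and an OA of strength $k_1$ has every smaller strength, the array of $G_1$ gives exactly $d^{m_1-|S_1|}$ vectors $x_1$ with $(x_1G_1)|_{S_1}=y_1$. When $S_1=\emptyset$ this reads $d^{m_1}$.
\item For each such $x_1$, the condition on $x_2$ is $(x_2G_2)|_{S_2}=y_2-(x_1B)|_{S_2}$. This is a translate of the target, so by strength $|S_2|\le k_2$ it has exactly $d^{m_2-|S_2|}$ solutions, independently of $x_1$.
\end{itemize}
The total count is therefore $d^{m_1+m_2-k}$ for every target, which is strength $k$.

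I expect the only delicate point to be justifying Condition~(1) over composite $d$ without claiming that some single minor is a unit. Routing the argument through Construction~\ref{c}'s equivalence, together with the invariance of the generated array under the row operation and column permutation, avoids this. One must only check that the row operation used is invertible over $\mathbb{Z}_d$, which holds since the block matrix is unitriangular.
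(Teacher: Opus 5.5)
Your proposal is correct and follows essentially the same route as the paper's proof. Both use the same unitriangular row operation and column reordering to reach the systematic form with right block $\begin{pmatrix}G_1'&B_2-B_1G_2'\\0&G_2'\end{pmatrix}$, inherit Condition~(2) from the rows of $G_1'$ and $G_2'$, and obtain Condition~(1) from the same two-stage preimage count $d^{m_1-|S_1|}\cdot d^{m_2-|S_2|}=d^{m_1+m_2-k}$ combined with Construction~\ref{c}.
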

\Proof  See Appendix~\ref{app-B}.\qed

The correspondence between OAs and quantum states yields the following result.

\begin{theorem}\label{c1}
Let the quantum states $|\psi_1\rangle$, $|\psi_2\rangle$ and $|\psi\rangle$ be given by the OAs $A_1$, $A_2$ and $A$ in Lemma~$\ref{t4},$ respectively. Then we have

 $(1)$ $|\psi\rangle = |\psi_1\rangle \otimes |\psi_2\rangle;$

 $(2)$ If $|\psi_1\rangle$ and $|\phi_1\rangle$ are $LU_F$ equivalent, and $|\psi_2\rangle$ and $|\phi_2\rangle$ are $LU_F$ equivalent, then $|\psi\rangle$ and $|\phi\rangle$ are $LU_F$ equivalent, where $|\phi\rangle = |\phi_1\rangle \otimes |\phi_2\rangle$.
\end{theorem}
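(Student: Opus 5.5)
For part (1), I would compute the rows generated by $G$ directly. Write an input vector of $G$ as $x=(x_1,x_2)$ with $x_i\in\mathbb{Z}_d^{m_i}$. Then
\[
xG=(x_1G_1,\,x_2G_2),
\]
so the rows of $A$ are exactly the concatenations of a row of $A_1$ with a row of $A_2$, each pair occurring once. The identity blocks in $G_1$ and $G_2$ make the rows of $A_1$ and $A_2$ distinct, so the $d^{m_1+m_2}$ rows of $A$ are distinct as well. This is consistent with the systematic form obtained in Lemma~\ref{t4} after reordering coordinate blocks. Hence
\[
|\psi\rangle=d^{-(m_1+m_2)/2}\sum_{x_1,x_2}|x_1G_1\rangle\otimes|x_2G_2\rangle
=\Big(d^{-m_1/2}\sum_{x_1}|x_1G_1\rangle\Big)\otimes\Big(d^{-m_2/2}\sum_{x_2}|x_2G_2\rangle\Big),
\]
which is $|\psi_1\rangle\otimes|\psi_2\rangle$. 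This is pure bookkeeping, and the only care needed is that the party order in $G$ places the $n_1$ columns of $G_1$ first and the $n_2$ columns of $G_2$ after, so the tensor factors line up.

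For part (2), I would use the definition of $LU_F$ equivalence. There are integers $e_1,\dots,e_{n_1}$ with $|\phi_1\rangle=(U_d^{e_1}\otimes\cdots\otimes U_d^{e_{n_1}})|\psi_1\rangle$, and integers $f_1,\dots,f_{n_2}$ with $|\phi_2\rangle=(U_d^{f_1}\otimes\cdots\otimes U_d^{f_{n_2}})|\psi_2\rangle$. By the mixed-product property of the tensor product,
\[
|\phi\rangle=|\phi_1\rangle\otimes|\phi_2\rangle
=\big(U_d^{e_1}\otimes\cdots\otimes U_d^{e_{n_1}}\otimes U_d^{f_1}\otimes\cdots\otimes U_d^{f_{n_2}}\big)\big(|\psi_1\rangle\otimes|\psi_2\rangle\big).
\]
By part (1), the right-hand state is $|\psi\rangle$. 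The operator is a tensor product of integer powers of $U_d$ acting on the $n_1+n_2$ parties in the fixed order, so $|\psi\rangle$ and $|\phi\rangle$ are $LU_F$ equivalent.

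The statement says nothing further about $|\phi\rangle$ corresponding to an OA. If one wants that, the case $e_i=f_j=1$ follows from Lemma~\ref{t4} together with part (1) applied to $H$. I expect no real obstacle. The only subtle point is keeping the coordinate order fixed and not using the reordered systematic form. Lemma~\ref{t4} explicitly works in the original coordinate order, so the factorization is valid there.
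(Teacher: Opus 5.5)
Your proposal is correct and follows essentially the same route as the paper's proof: identify the row set of $A$ as the Cartesian product of the row sets of $A_1$ and $A_2$ (in the original coordinate order) to factor the state, then use the mixed-product property to combine the two block operators $V_1$ and $V_2$, each a tensor product of integer powers of $U_d$, into a single such operator $V_1\otimes V_2$. The paper phrases the first step in terms of row sets rather than input vectors, but the two formulations amount to the same thing.
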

\Proof See Appendix~\ref{app-C}.\qed
\section{LU equivalence of IrOAs}
In this section, we study $LU$  equivalence of irredundant orthogonal arrays (IrOAs).
IrOAs of strength $k$ can be used to construct $k$-uniform states  \cite{Goyeneche2014}. In addition,
 $LU$ operations preserve $k$-uniformity. Thus, $LU$ equivalent states associated with IrOAs have the same uniformity properties. We present several families of IrOAs that are $LU_F$ equivalent to IrOAs of strength $1$ or $2$.

We first introduce the following notation.
Let $S$ be a finite set and $S^l= \{(v_1,v_2,\ldots, v_l) : v_i \in S, i = 1,2,\ldots,l\}$. The \emph{Hamming distance} HD$(u,v)$ between the vectors $u=(u_1,u_2,\ldots, u_l)$, $v=(v_1,v_2,\ldots, v_l)\in S^l$ is defined as the number of positions in which they differ. The \emph{minimum distance} of a matrix $A$, written MD$(A)$, is defined to be the minimum Hamming distance between rows at distinct row indices. HD$(A)$ denotes the set of Hamming distances between rows at distinct row indices in $A$.

\begin{lemma}{\rm(\!\!\cite{Pang})}\label{l5}
An $\mathrm{OA}(r, N, d, k)$ is irredundant if and only if its minimum distance is greater than $k$.
\end{lemma}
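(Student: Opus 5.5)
Both directions reduce to a single equivalence: an $r\times(N-k)$ subarray has two equal rows exactly when some pair of rows of the full array differs in at most $k$ positions.

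First, suppose $\mathrm{MD}(A)\le k$. Then there are rows $u\neq v$ (at distinct row indices) with $\mathrm{HD}(u,v)\le k$. Let $S$ be the set of positions where $u$ and $v$ differ, so $|S|\le k$. Enlarge $S$ to a set $T$ of exactly $k$ columns; this is possible since $k\le N$. Delete the columns in $T$. In the remaining $r\times(N-k)$ subarray, the restrictions of $u$ and $v$ agree everywhere, so two rows are equal. Hence $A$ is not irredundant. In particular, if $u$ and $v$ are identical rows, the subarray has repeated rows regardless of which columns are kept.

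Conversely, suppose $A$ is not irredundant. Then some set of $N-k$ columns contains two rows at distinct indices that coincide there. These two rows can differ only in the remaining $k$ columns, so their Hamming distance is at most $k$, and therefore $\mathrm{MD}(A)\le k$.

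Taking contrapositives, $A$ is irredundant if and only if $\mathrm{MD}(A)>k$. The orthogonal-array property is never used; the argument only requires the convention that distance is measured between rows at distinct indices, so repeated rows give distance $0$. The only step needing care is the padding of $S$ to exactly $k$ columns, which is immediate from $k\le N$. There is no real obstacle.
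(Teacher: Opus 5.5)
Your proof is correct and complete. The paper gives no proof of its own here; it quotes the lemma from \cite{Pang}. Your argument is the standard one. In one direction, you pad the difference set of a close pair of rows to exactly $k$ columns and delete those columns. In the other, two rows that agree on $N-k$ columns can differ in at most the remaining $k$.

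Your remark that the orthogonality property is never used is accurate. The equivalence holds for any array, and the distinct-index convention in the definition of $\mathrm{MD}$ makes repeated rows count as distance $0$. This is the same combinatorial fact the paper relies on implicitly in the proof of Lemma~\ref{positive-uniform}. There, the off-diagonal terms vanish exactly when no two distinct rows agree on $\overline{T}$.
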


The next observation supplies the converse needed when transferring irredundancy through an $LU$  transformation.

\begin{lemma}\label{positive-uniform}
Let $A$ be an $r\times N$ array over $[d]$ with distinct rows, and let $|\psi_A\rangle$ be its normalized positive equal-amplitude state. For $1\leq k<N$, the state $|\psi_A\rangle$ is $k$-uniform if and only if $A$ is an $\mathrm{IrOA}(r,N,d,k)$.
\end{lemma}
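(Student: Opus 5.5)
I plan to compute the reduced density matrix of $|\psi_A\rangle$ directly and read off both conditions, strength $k$ and irredundancy, from its diagonal and off-diagonal parts. Fix a set $S$ of $k$ columns and let $\bar S$ be its complement, which is nonempty because $k<N$. For a row $s^j$ of $A$, write $s^j=(s^j_S,s^j_{\bar S})$. Then
\[
\rho_S=\operatorname{Tr}_{\bar S}|\psi_A\rangle\langle\psi_A|=\frac1r\sum_{j,l}\langle s^l_{\bar S}|s^j_{\bar S}\rangle\,|s^j_S\rangle\langle s^l_S|=\frac1r\sum_{(j,l):\,s^j_{\bar S}=s^l_{\bar S}}|s^j_S\rangle\langle s^l_S|.
\]
Maximal mixedness means $\rho_S=d^{-k}I_{d^k}$. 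I will show this holds for every $S$ if and only if $A$ is an IrOA$(r,N,d,k)$.

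\emph{Diagonal part.} Every pair $(j,j)$ appears in the sum. If $j\neq l$ and $s^j_{\bar S}=s^l_{\bar S}$, then $s^j_S\neq s^l_S$, since the rows of $A$ are distinct. So such pairs never land on the diagonal. The diagonal entry of $\rho_S$ at $|u\rangle\langle u|$ is therefore $\frac1r\,\#\{j: s^j_S=u\}$. It equals $d^{-k}$ for all $u$ exactly when every $k$-tuple occurs $r/d^k$ times in the columns $S$, which is the strength-$k$ condition on $S$.

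\emph{Off-diagonal part.} Every coefficient in the sum is the positive number $1/r$, so no cancellation can occur. This is where positivity of the equal amplitudes is used, and I expect it to be the one point that must be stated carefully. The off-diagonal part of $\rho_S$ is a sum of nonnegative contributions, one for each ordered pair $j\neq l$ with $s^j_{\bar S}=s^l_{\bar S}$. Hence it vanishes if and only if no two distinct rows agree on $\bar S$, that is, if and only if the $r\times(N-k)$ subarray on $\bar S$ has distinct rows.

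Combining the two parts over all $k$-subsets $S$: $\rho_S=d^{-k}I$ for every $S$ if and only if $A$ has strength $k$ and every $(N-k)$-column subarray has distinct rows. This is exactly the definition of an IrOA$(r,N,d,k)$, and it proves both directions at once. Lemma~\ref{l5} is not needed, although the off-diagonal condition could be rephrased through it as $\mathrm{MD}(A)>k$.
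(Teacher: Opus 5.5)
Your proof is correct and follows essentially the same route as the paper. You compute $\rho_S$ by partial trace, use distinctness of the rows to show that only the pairs $(j,j)$ reach the diagonal, which gives the strength-$k$ condition, and use positivity of the amplitudes to show that the off-diagonal part vanishes exactly when the $(N-k)$-column subarray on $\bar S$ has distinct rows. You are also slightly more explicit than the paper in explaining why cross terms with $s^j_{\bar S}=s^l_{\bar S}$ and $j\neq l$ can never contribute to a diagonal entry.
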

\Proof
 Consider a bipartition ($T,\overline{T}$) of the $N$ coordinates, where $|T|=k$ and $\overline{T}$ is the complement of $T$  with $|\overline{T}|=N-k$. For any row $a\in A$, let $a_T$ and $a_{\overline{T}}$ denote the sets of elements of $a$ in coordinates $T$ and $\overline{T}$, respectively. The reduced density matrix $\rho_T$ on any $k$ coordinates $T$ can be expresses as
\begin{equation*}
\rho_T=\frac{1}{r}\sum_{a,b \in A} |a_T\rangle\langle b_T|\times  \langle a_{\overline{T}}|b_{\overline{T}}\rangle.
\end{equation*}
For $a_T=b_T$, distinctness of the full rows forces $a=b$, so the diagonal entries are all $d^{-k}$ exactly when the projection onto $T$ is uniform. The off-diagonal entries are nonnegative and vanish exactly when no two distinct rows agree on $\overline{T}$. Requiring both properties for every $T$ is precisely the definition of an IrOA.\qed

In particular, if an $LU$  image of an IrOA state is again a positive equal-amplitude state with distinct support, its support is an IrOA of the same strength. Positivity is essential to this converse because it rules out cancellation of off-diagonal terms.

\begin{theorem}\label{t5}
Let $d, m, n \in \mathbb{N}^+$ with $d\geq2$, $m<n$, and let $G = (I_m \mid G')$ be an $m \times n$ matrix over $\mathbb{Z}_d$ satisfying the $LU_F$ conditions. Then $G$ generates an $\mathrm{IrOA}(d^m, n, d, 1)$, which is $LU_F$ equivalent to an $\mathrm{IrOA}(d^t, n, d, 1)$, where $t = n - m$.
\end{theorem}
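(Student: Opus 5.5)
Theorem~\ref{t1} supplies both arrays and the Fourier identity. What remains is to check irredundancy at strength $1$ on each side, which by Lemma~\ref{l5} means showing the minimum distance is at least $2$.

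Let $A$ be the array generated by $G=(I_m\mid G')$. Since $G$ satisfies the $LU_F$ conditions at some strength $k\geq1$, it satisfies them at strength $1$: an OA of strength $k$ has every smaller strength. Construction~\ref{c} and Theorem~\ref{t1} then show that $A$ is an $\mathrm{OA}(d^m,n,d,1)$ with distinct rows $(x,xG')$, $x\in\mathbb{Z}_d^m$.

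For irredundancy of $A$, take two distinct rows $(x,xG')$ and $(y,yG')$, and put $z=x-y\neq 0$. If $z$ has at least two nonzero coordinates, the identity block alone gives distance at least $2$. Otherwise $z=c\,e_i$ with $c\neq0$, and the rows differ in the $G'$-part by $c\,(g'_{i1},\dots,g'_{it})$. Condition~(2) gives some $g'_{ij}\in\mathbb{Z}_d^*$, so $c\,g'_{ij}\neq0$, and the distance is again at least $2$. By Lemma~\ref{l5}, $A$ is an $\mathrm{IrOA}(d^m,n,d,1)$.

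By Theorem~\ref{t1}, $U_d^{\otimes n}|\psi_A\rangle$ is the state of the array $A'$ generated by $H=(-(G')^T\mid I_t)$, an $\mathrm{OA}(d^t,n,d,1)$ with distinct rows $(-b(G')^T,b)$. We need $\mathrm{MD}(A')\geq 2$. Two rows differ by $(-w(G')^T,w)$ with $w\neq0$. If $w$ has at least two nonzero entries we are done. The obstacle is the case $w=c\,e_j$: then the first block is $-c$ times the $j$-th column of $G'$, and we need a nonzero entry there. This is a condition on columns of $G'$, whereas Condition~(2) only concerns rows.

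To get it, I would use Condition~(1) at strength $1$. For the $j$-th column of $G'$, which is column $m+j$ of $G$, the gcd of its entries must be a unit. In particular that column is not identically zero mod $d$, so $c$ times it is nonzero whenever $c$ is a unit. For non-unit $c$ with composite $d$, the gcd of the column entries is a unit, so multiplication by $c\neq0$ cannot kill all of them. Indeed, if $c\,g'_{ij}\equiv0$ for all $i$, then $d/\gcd(c,d)$ divides every $g'_{ij}$ and hence divides their gcd, which is impossible because that gcd is a unit and $d/\gcd(c,d)>1$. So the distance is again at least $2$, and $A'$ is an $\mathrm{IrOA}(d^t,n,d,1)$.

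Finally, since $A$ and $A'$ have distinct rows and positive equal-amplitude states related by $U_d^{\otimes n}$, they are $LU_F$ equivalent. This is consistent with Lemma~\ref{positive-uniform}, since $LU$ transformations preserve $1$-uniformity. The main care point is the column argument for $A'$, where irredundancy rests on Condition~(1) rather than Condition~(2).
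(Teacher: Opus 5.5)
Your proof is correct. The argument for $A$ matches the paper's. For $A'$ you take a different route.

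The paper never examines the rows generated by $H=(-(G')^T\mid I_t)$. Its argument runs as follows:
\begin{enumerate}
\item $|\psi_A\rangle$ is $1$-uniform because $A$ is an IrOA of strength $1$.
\item $1$-uniformity is preserved by the local unitary $U_d^{\otimes n}$.
\item Lemma~\ref{positive-uniform} then applies: a positive equal-amplitude state with distinct support is $k$-uniform exactly when its support is an IrOA. Hence $A'$ is irredundant.
\end{enumerate}

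You instead prove $\mathrm{MD}(A')\geq 2$ directly. You correctly see that the critical weight-one differences $w=c\,e_j$ are governed by the columns of $G'$. That is, they are governed by Condition~(1) at strength~$1$, not by Condition~(2). You also rightly justify that strength-$1$ condition from strength $k$ via Construction~\ref{c}. Your divisibility argument with $d/\gcd(c,d)$ handles non-unit $c$ for composite $d$ correctly.

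Each route has its advantage. Yours is elementary and purely combinatorial. It also exposes the coding-theoretic mechanism: strength $1$ of $A$ is exactly the absence of weight-one differences in its Fourier partner, which is the familiar strength/dual-distance duality. The paper's route needs no analysis of $H$ at all. It is the same transfer template the paper reuses for strength~$2$ in Theorems~\ref{t6} and~\ref{t7}, where a direct minimum-distance computation on the partner array would be considerably messier.
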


\Proof Let $A$ be generated by $G$, which satisfies the $LU_F$ conditions at some strength $k\geq1$. By Theorem~\ref{t1}, $A$ is an OA of strength at least $1$, and its Fourier partner $A'$ is an $\mathrm{OA}(d^t,n,d,1)$ with distinct rows.

Consider a nonzero difference $z\in\mathbb{Z}_d^m$ between two input vectors. Their corresponding rows differ by $(z,zG')$. If $z$ has at least two nonzero coordinates, their Hamming distance is at least $2$. Otherwise, let $z_i$ be its unique nonzero coordinate. Some entry $g'_{ij}$ is a unit, so $z_i g'_{ij}\neq0$. The rows then differ both in coordinate $i$ and in a coordinate of the final block. Hence $\mathrm{MD}(A)\geq2$, and Lemma~\ref{l5} makes $A$ an IrOA of strength $1$.

Its state is therefore $1$-uniform. Since $LU$  transformations preserve $1$-uniformity, Lemma~\ref{positive-uniform} shows that $A'$ is also an IrOA of strength $1$.\qed

\begin{example}
The OAs $A$ and $A'$ in Example~$\ref{e1}$ have minimum distances $2$ and $4$, respectively. By Lemma~$\ref{l5}$, $A$ is an $\mathrm{IrOA}(16,6,2,1)$ and $A'$ is an $\mathrm{IrOA}(4,6,2,1)$. They are $LU_F$ equivalent, in agreement with Theorem~$\ref{t5}.$
\end{example}
\begin{theorem}\label{t6}
Let $d,m,k\in\mathbb{N}^+$ with $d\geq 2$, $m\geq 4$, and $2\leq k\leq m$, and let $G = (I_m \mid G')$ be an $m \times (2m-1)$ matrix over $\mathbb{Z}_d$. If $G$ satisfies the following conditions:

$(1)$ $D_k \in \mathbb{Z}_d^*$ for every $m \times k$ submatrix $D$ of $G;$

 $(2)$ the entries of $G'$ belong to $\{0\}\cup\mathbb{Z}_d^*$, and each column of $G'$ contains exactly one $0$, while each row contains at most one $0$.
\\
Then $G$ generates an $\mathrm{IrOA}(d^m, 2m-1, d, 2)$, which is $LU_F$ equivalent to an $\mathrm{IrOA}(d^{m-1}, \\2m-1, d, 2)$.
\end{theorem}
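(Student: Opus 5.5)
The plan is to reduce everything to Theorem~\ref{t1} and Lemma~\ref{positive-uniform}. The only genuinely new work is proving that $A$ has minimum distance at least $3$.

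\emph{Step 1: the $LU_F$ conditions hold.} Here $G'$ is $m\times(m-1)$ and $m-1\geq 3$. Each row of $G'$ has at most one $0$, so it has at least $m-2\geq 2$ unit entries, and Condition~(2) of Theorem~\ref{t1} holds. Condition~(1) is assumed with $k\geq 2$. Theorem~\ref{t1} then shows that $G$ generates an $\mathrm{OA}(d^m,2m-1,d,k)$ with distinct rows. By Theorem~\ref{t3}, its Fourier image $U_d^{\otimes(2m-1)}|\psi_A\rangle$ is the positive equal-amplitude state of the array $A'$ generated by $H=(-(G')^T\mid I_{m-1})$. This array has $d^{m-1}$ distinct rows.

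\emph{Step 2: $\mathrm{MD}(A)\geq 3$, which is the main obstacle.} Two rows differ by $(z,zG')$ for some nonzero $z\in\mathbb{Z}_d^m$. If $z$ has weight at least $3$ we are done. If $z$ has weight $1$, say $z=z_ie_i$, then $zG'=z_i g'_{i\cdot}$, and $z_i$ times a unit is nonzero. So row $i$ of $G'$ contributes at least $m-2\geq 2$ nonzero coordinates, giving total weight at least $3$.

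The delicate case is weight $2$, with $z$ supported on rows $i\neq j$, where cancellation could occur. Here I use the zero pattern. There are $m-1$ columns, each containing exactly one $0$, and each row contains at most one $0$. Hence the $m-1$ zeros lie in $m-1$ distinct rows, so exactly one row of $G'$ has no zero. Therefore at least one of rows $i,j$, say $i$, has its zero in some column $c$. Since column $c$ has only one zero, $g'_{jc}$ is a unit. The $c$-th coordinate of $zG'$ is then $z_jg'_{jc}\neq 0$, and the total weight is at least $3$. By Lemma~\ref{l5}, $A$ is an $\mathrm{IrOA}(d^m,2m-1,d,2)$.

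\emph{Step 3: transfer to the partner.} By Lemma~\ref{positive-uniform} (note $2<2m-1$), $|\psi_A\rangle$ is $2$-uniform. $LU$ transformations preserve $2$-uniformity, so $U_d^{\otimes(2m-1)}|\psi_A\rangle=|\psi_{A'}\rangle$ is also $2$-uniform. Since $|\psi_{A'}\rangle$ is a positive equal-amplitude state of an array with distinct rows, the converse direction of Lemma~\ref{positive-uniform} shows that $A'$ is an $\mathrm{IrOA}(d^{m-1},2m-1,d,2)$. This completes the $LU_F$ equivalence.

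Arguing through uniformity avoids verifying strength $2$ of $H$ directly via $2\times2$ minors. That direct check would be awkward over composite $d$ for pairs of columns coming from $-(G')^T$.
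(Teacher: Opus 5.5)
Your proposal is correct and follows essentially the same route as the paper's proof in Appendix~\ref{app-D}. Both verify the $LU_F$ conditions, show $\mathrm{MD}(A)\geq 3$ by case analysis on the support of $z$ using the zero pattern of $G'$, apply Lemma~\ref{l5}, and transfer irredundancy to the Fourier partner through $2$-uniformity and Lemma~\ref{positive-uniform}. The one difference is cosmetic: the paper first reindexes so that row $m$ of $G'$ is the zero-free row and splits the weight-two case accordingly, whereas you handle that case in a single step by observing that at least one of the two rows contains a zero.
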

\Proof See Appendix~\ref{app-D}.\qed

\noindent \textbf{Remark}
Under the hypotheses of Theorem~\ref{t6}, there exists an OA$(d^m, 2m-1, d, k)$, which is $LU_F$ equivalent to an OA$(d^{m-1}, 2m-1, d, 2)$, where $m, k \in \mathbb{N}^+$ and $2\leq k \leq m$.

\begin{theorem}
For $d, m \in \mathbb{N}^+$ with $d \geq 2$, $m \geq 4$, there exists an $\mathrm{IrOA}(d^m, 2m-1, d, 2)$ over $\mathbb{Z}_d$, which is $LU_F$ equivalent to an $\mathrm{IrOA}(d^{m-1}, 2m-1, d, 2)$.
\end{theorem}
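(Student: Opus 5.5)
The plan is to give an explicit matrix $G=(I_m\mid G')$ of size $m\times(2m-1)$ over $\mathbb{Z}_d$ that satisfies conditions (1) (with $k=2$) and (2) of Theorem~\ref{t6}, and then apply that theorem. All entries will be $0$ or $1$, so the same matrix works for every $d\geq2$, prime or composite.

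\textbf{The matrix.} Let $G'=(g'_{ij})$ be the $m\times(m-1)$ matrix with
\[
g'_{ij}=\begin{cases}0,& i=j,\\ 1,& i\neq j,\end{cases}\qquad 1\leq i\leq m,\ 1\leq j\leq m-1.
\]
So $G'$ is the all-ones matrix with zeros on the main diagonal, and its last row consists entirely of $1$'s.

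\textbf{Condition (2).} All entries lie in $\{0\}\cup\mathbb{Z}_d^*$, since $1\in\mathbb{Z}_d^*$. Column $j$ has its only $0$ in row $j$. Rows $1,\dots,m-1$ each contain exactly one $0$, and row $m$ contains none.

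\textbf{Condition (1) with $k=2$.} Let $D$ be an $m\times 2$ submatrix of $G$. It suffices to exhibit one $2\times2$ minor of $D$ equal to $\pm1$, because then $D_2\in\mathbb{Z}_d^*$. There are three cases, according to where the two columns come from.
\begin{itemize}
\item \emph{Two columns $e_i,e_l$ of $I_m$.} Rows $i$ and $l$ give the minor $1$.
\item \emph{A column $e_i$ of $I_m$ and column $j$ of $G'$.} Take row $i$ together with a row $r\neq i$ satisfying $g'_{rj}=1$. The minor is
\[
\det\begin{pmatrix}1&g'_{ij}\\0&g'_{rj}\end{pmatrix}=g'_{rj}=1.
\]
Such an $r$ exists because column $j$ of $G'$ contains $m-1\geq3$ ones, so at least one of them lies outside row $i$.
\item \emph{Columns $j\neq l$ of $G'$.} Take rows $j$ and $l$. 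The minor is
\[
\det\begin{pmatrix}0&1\\1&0\end{pmatrix}=-1,
\]
which is a unit.
\end{itemize}

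\textbf{Conclusion.} Both hypotheses of Theorem~\ref{t6} hold with $k=2$, and $m\geq4$ is part of the assumption. Theorem~\ref{t6} then shows that $G$ generates an $\mathrm{IrOA}(d^m,2m-1,d,2)$. The same theorem, via the Fourier partner $H=(-(G')^T\mid I_{m-1})$ from Theorem~\ref{t3}, shows this array is $LU_F$ equivalent to an $\mathrm{IrOA}(d^{m-1},2m-1,d,2)$.

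\textbf{Main obstacle.} The only real difficulty is choosing a $G'$ whose $2\times2$ minors are genuine units even when $d$ is composite. The mixed case $(e_i,\text{column of }G')$ fails if a column of $G'$ has too few units outside row $i$. Placing a single zero per column avoids this and still leaves enough units in every column.
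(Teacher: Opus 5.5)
Your proof is correct and follows the paper's route: the same matrix $G'$ (all ones except zeros at positions $(j,j)$, $1\leq j\leq m-1$), fed into Theorem~\ref{t6}. The only difference is that you verify Condition~(1) at $k=2$, which Theorem~\ref{t6} permits and which suffices for the statement. The paper instead checks $3\times 3$ minors to get $k=3$, which also shows the array is an $\mathrm{OA}(d^m,2m-1,d,3)$.
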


\Proof Let $G=(I_m\mid G')$ be an $m \times (2m-1)$ matrix over $\mathbb{Z}_d$, where
$$
G' = \begin{pmatrix}
0 & 1 & \cdots & 1\\
1 & 0 & \cdots & 1\\
\vdots & \vdots & \ddots & \vdots\\
1 & 1 & \cdots & 0\\
1 & 1 & \cdots & 1
\end{pmatrix}.$$

We verify Condition~(1) of Theorem~\ref{t6} with $k=3$. Write the identity columns as $e_1,\ldots,e_m$ and the remaining columns as $v_j=\textbf{1}_m-e_j$, $1\leq j\leq m-1$. Every selection of three columns has a $3\times3$ minor equal to $1$ or $-1$:
\begin{itemize}
\item For three identity columns, use their three corresponding rows.
\item For two identity columns and one $v_j$, choose an additional row outside the two identity positions at which $v_j$ is $1$. Such a row exists because $m\geq4$.
\item For $e_a,v_b,v_c$ with $a,b,c$ distinct, use rows $a,b,c$. If $a=b$ or $a=c$, use these two distinct positions and any third position.
\item For $v_a,v_b,v_c$ with distinct indices, use rows $a,b,h$, where $h\notin\{a,b,c\}$; the resulting minor has determinant $1$.
\end{itemize}
Thus every three-column submatrix has a unit determinantal divisor for every $d\geq2$. The zero pattern also satisfies Condition (2). Theorem~\ref{t6} therefore gives an $\mathrm{OA}(d^m,2m-1,d,$ $3)$ that is an $\mathrm{IrOA}(d^m,2m-1,d,2)$, with an $LU_F$ equivalent $\mathrm{IrOA}(d^{m-1},2m-1,d,2)$.\qed

As in Theorem~\ref{t6}, for $d\geq 2$ and $m\geq 3$, we obtain $m\times 2m$ generator matrices whose IrOAs are $LU_F$ equivalent to IrOAs with the same parameters.

\begin{theorem}\label{t7}
Let $d,m,k\in\mathbb{N}^+$ with $d\geq 2$, $m\geq 3$, and $2\leq k\leq m$, and let $G = (I_m \mid G')$ be an $m \times 2m$ matrix over $\mathbb{Z}_d$. If $G$ satisfies the following conditions:

 $(1)$ for every $m \times k$ submatrix $D$ of $G$, $D_k \in \mathbb{Z}_d^*;$

 $(2)$ the entries of $G'$ belong to $\{0\}\cup\mathbb{Z}_d^*$, while each column and each row contains exactly one $0$.
\\
 Then $G$ generates an $\mathrm{IrOA}(d^m, 2m, d, 2)$ over $\mathbb{Z}_d$, which is $LU_F$ equivalent to an $\mathrm{IrOA}(d^m, 2m, d, 2)$.
\end{theorem}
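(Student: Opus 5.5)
Condition~(2) gives each row of $G'$ a unit entry, and condition~(1) holds at strength $k\geq 2$, so $G$ satisfies the $LU_F$ conditions at strength $k$. Theorem~\ref{t1} then says $G$ generates an $\mathrm{OA}(d^m,2m,d,k)$, in particular of strength $2$, with distinct rows. Theorem~\ref{t3} identifies its Fourier partner under $U_d^{\otimes 2m}$ as the array $A'$ generated by $H=(-(G')^T\mid I_m)$; since $t=m$, it has $d^m$ distinct rows. It remains to show that both arrays are irredundant of strength $2$. Then Lemma~\ref{positive-uniform} (or simply the equivalence itself) finishes the proof.

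\emph{Irredundancy of $A$.} By Lemma~\ref{l5} it suffices to show $\mathrm{MD}(A)\geq 3$. Take a nonzero $z\in\mathbb{Z}_d^m$; the row difference is $(z,zG')$. If $z$ has at least three nonzero coordinates we are done. If $z$ has exactly one nonzero coordinate $z_i$, row $i$ of $G'$ has exactly one $0$ and $m-1\geq 2$ unit entries, so $zG'$ has at least two nonzero entries, giving weight $\geq 3$. If $z$ has exactly two nonzero coordinates $z_i,z_j$, we need one nonzero coordinate of $zG'$. Each column of $G'$ has exactly one zero, and rows $i$ and $j$ have their zeros in distinct columns. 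The column where row $i$ is $0$ but row $j$ is a unit yields the entry $z_jg'_{jc}\neq 0$. So the weight is $\geq 3$.

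\emph{Irredundancy and strength of $A'$.} The matrix $-(G')^T$ satisfies the same hypothesis~(2), because transposition swaps the roles of rows and columns and negation preserves units and zeros. The same Hamming-weight argument applied to $(-b(G')^T,b)$, with the roles of the two blocks swapped, gives $\mathrm{MD}(A')\geq 3$. The main obstacle is strength $2$ for $A'$, since condition~(1) is not obviously inherited by $H$. I would handle it directly rather than through Lemma~\ref{positive-uniform}: for $A$ we know $|\psi_A\rangle$ is $2$-uniform, since $A$ is an IrOA of strength $2$. $LU$ transformations preserve $2$-uniformity, so $U_d^{\otimes 2m}|\psi_A\rangle=|\psi_{A'}\rangle$ is $2$-uniform. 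It is a positive equal-amplitude state with distinct rows, so Lemma~\ref{positive-uniform} (with $2<2m$) shows $A'$ is an $\mathrm{IrOA}(d^m,2m,d,2)$. This makes the separate minimum-distance check for $A'$ redundant, though it is a good consistency check.

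If one prefers a purely combinatorial verification of strength $2$ for $H$, check every pair of columns of $H$ for a $2\times2$ unit minor (or unit gcd of minors). Two identity columns are trivial. An identity column $e_i$ and a column $-(G')^T_{\cdot c}$ need that column to have a unit outside row $i$, which holds because it has $m-1\geq 2$ units. Two columns from $-(G')^T$ correspond to two rows $r,s$ of $G'$, whose zeros lie in distinct positions, and the $2\times 2$ minor on those positions is $\pm$ a product of units. This confirms the result independently.
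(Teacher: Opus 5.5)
Your proposal is correct and follows the paper's proof: the same case analysis on the number of nonzero coordinates of $z$ gives $\mathrm{MD}(A)\geq 3$, and the $2$-uniformity of $|\psi_A\rangle$ is carried through $U_d^{\otimes 2m}$ to the partner generated by $H=(-(G')^T\mid I_m)$, which Lemma~\ref{positive-uniform} then shows is an $\mathrm{IrOA}(d^m,2m,d,2)$. Your extra combinatorial check, using unit $2\times 2$ minors of $H$ together with the fact that hypothesis~(2) is preserved under $G'\mapsto -(G')^T$, is also valid; it gives a quantum-free verification that the partner is an IrOA of strength $2$, which the paper does not spell out.
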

\Proof See Appendix~\ref{app-E}.\qed

The following example gives two such arrays with different row sets. This does not by itself establish combinatorial inequivalence.
\begin{example}
There exists an $\mathrm{IrOA}(3^4, 8, 3, 2)$ over $\mathbb{Z}_3$, which is $LU_F$ equivalent to an $\mathrm{IrOA}(3^4, 8,$ $ 3, 2)$.
\end{example}

\Proof Let $d = 3$, $m = 4$, and let $G_1$ be the following $4\times 8$ matrix over $\mathbb{Z}_3$:
$$
G_1 = \begin{pmatrix}
1 & 0 & 0 & 0 & 0 & 1 & 2 & 2\\
0 & 1 & 0 & 0 & 2 & 0 & 2 & 1\\
0 & 0 & 1 & 0 & 2 & 2 & 0 & 2\\
0 & 0 & 0 & 1 & 1 & 2 & 2 & 0
\end{pmatrix}.$$
It is readily verified that $G_1$ meets the conditions stated in Theorem~\ref{t7}. Furthermore, any $2$ columns of $G_1$ are linearly independent, whereas some $3$ columns are linearly dependent. By Construction~\ref{c}, $G_1$ generates an $\mathrm{OA}(3^4, 8, 3, 2)$ $A$ over $\mathbb{Z}_3$, and there exists an OA $A'$ that is $LU_F$ equivalent to $A$.
In light of Theorem~\ref{t3}, $A'$ is generated by the following $4 \times 8$ matrix
$$
G_2 = \begin{pmatrix}
0 & 1 & 1 & 2 & 1 & 0 & 0 & 0\\
2 & 0 & 1 & 1 & 0 & 1 & 0 & 0\\
1 & 1 & 0 & 1 & 0 & 0 & 1 & 0\\
1 & 2 & 1 & 0 & 0 & 0 & 0 & 1
\end{pmatrix}.$$

We can check that any $2$ columns of $G_2$ are linearly independent and some $3$ columns are linearly dependent, which implies that $A'$   is an $\mathrm{OA}(3^4, 8, 3, 2)$. According to Theorem~\ref{t7}, both $A$ and $A'$ are $\mathrm{IrOAs}$ of strength $2$. Moreover, the vector $(0,0,0,1,1,2,2,0)$ is a row of $A$, but it is not contained in $A'$. So, $A$ and $A'$ are different.
\qed
\section{LU equivalence of linear OAs and codes duality}

The rows of an orthogonal array can be interpreted as codewords. Conversely, the dual distance of a linear code determines the strength of its associated array. We now relate the Fourier constructions in Section 2 to this correspondence.

The coding-theoretic definitions below apply to prime powers $q$. However, whenever we apply the cyclic Fourier matrix $U_q$ from Section 2 to codes over $\mathbb{GF}(q)$, we assume that $q$ is prime, so that $\mathbb{GF}(q)=\mathbb{Z}_q$. For nonprime prime powers, the additive group of $\mathbb{GF}(q)$ is not the cyclic group $\mathbb{Z}_q$; a field-additive Fourier transform would be needed instead. We do not identify these two transforms.

Let $\mathbb{GF}(q)^n$ be an $n$-dimensional row vector space over the finite field $\mathbb{GF}(q)$, where $q$ is a prime power.  A vector subspace $\mathcal{C}$ of $\mathbb{GF}(q)^n$ is called a $q$-\emph{ary linear code}. The vectors in $\mathcal{C}$ are called \emph{codewords}. The \emph{minimum distance} of a code $\mathcal{C}$ is the smallest Hamming distance between any two distinct codewords, denoted by $d(\mathcal{C})$. Thus, a $q$-ary linear code $\mathcal{C}$ has three basic parameters: code length $n$, dimension $m = \log_q |\mathcal{C}|$ and minimum distance $d = d(\mathcal{C})$.  We refer to such a code as an $[n,m,d]_q$ \emph{code}. When $q=2$, the subscript is usually omitted.

For an $[n, m, d]_q$ code $\mathcal{C}$,  we take a basis $\{c_1, c_2, \dots, c_m\}$ of $\mathcal{C}$ in $\mathbb{GF}(q)^n$.
Each codeword $c$ of $\mathcal{C}$ can be written uniquely as
\begin{eqnarray*}
c = x_1 c_1 + x_2 c_2 + \cdots + x_m c_m = (x_1, x_2, \dots, x_m)~ G,
\end{eqnarray*}
where $x_1, x_2, \dots, x_m \in \mathbb{GF}(q)$, and
$$G = \begin{pmatrix} c_1\\ c_2\\ \vdots\\ c_m \end{pmatrix}.$$
 We refer to $G$ as a \emph{generator matrix} of the linear code $\mathcal{C}$. Moreover,  define
\begin{eqnarray*}
\mathcal{C}^\perp = \{ v \in \mathbb{GF}(q)^n : cv^T=0\text{ for every }c\in\mathcal{C} \}.
\end{eqnarray*}
Then $\mathcal{C}^\perp$ is also a linear subspace of $\mathbb{GF}(q)^n$, $\dim \mathcal{C} + \dim \mathcal{C}^\perp = n$ and $(\mathcal{C}^\perp)^\perp=\mathcal{C}$. Thus, $\mathcal{C}^\perp$ is a linear code with parameters $[n, n-m, d^\perp]_q$, called the \emph{dual} of $\mathcal{C}$. $d^\perp$ is the minimum distance of $\mathcal{C}^\perp$, called the \emph{dual distance} of $\mathcal{C}$. A generator matrix of $\mathcal{C}^\perp$ is called the \emph{parity-check matrix} of $\mathcal{C}$.

\begin{lemma}{\rm(\!\!\cite{Feng})}\label{l7}
Let $G=(I_m\mid G')$ generate an $[n,m,d]_q$ code $\mathcal{C}$ with $0<m<n$. Then $\mathcal{C}$ has a parity-check matrix $H = (-(G')^T \mid I_{n-m})_{(n-m)\times n}$. Further,   $\mathcal{C}$ has minimum distance $d$ if and only if every $d-1$ columns of $H$ are linearly independent over $\mathbb{GF}(q)$, and some $d$ columns are linearly dependent.
\end{lemma}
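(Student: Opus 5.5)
We prove the two assertions of Lemma~\ref{l7} in turn: first that $H=(-(G')^T\mid I_{n-m})$ is a parity-check matrix of $\mathcal{C}$, and then the column criterion for the minimum distance.

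\textbf{Step 1: $H$ is a parity-check matrix.} Since $G=(I_m\mid G')$, we compute
\[
GH^T=I_m\cdot(-G')+G'\cdot I_{n-m}=0 .
\]
Hence every row of $H$ is orthogonal to every codeword, so the row space of $H$ lies in $\mathcal{C}^\perp$. The identity block shows $\operatorname{rank}(H)=n-m$. Because $\dim\mathcal{C}^\perp=n-m$, the row space of $H$ equals $\mathcal{C}^\perp$. This step is routine.

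\textbf{Step 2: codewords as column dependencies.} Since $(\mathcal{C}^\perp)^\perp=\mathcal{C}$, a vector $c\in\mathbb{GF}(q)^n$ lies in $\mathcal{C}$ if and only if $Hc^T=0$. Writing $h_1,\dots,h_n$ for the columns of $H$, this condition reads
\[
\sum_{i=1}^n c_i h_i=0 .
\]
So a nonzero codeword of weight $w$ with support $S$ is exactly a nontrivial linear dependence among the columns $\{h_i: i\in S\}$ in which every coefficient is nonzero.

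\textbf{Step 3: the criterion.} For a linear code, $d(\mathcal{C})$ equals the minimum weight of a nonzero codeword, because $\mathrm{HD}(u,v)$ is the weight of $u-v\in\mathcal{C}$.

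\emph{Forward direction.} Suppose $d(\mathcal{C})=d$. If some $d-1$ columns were linearly dependent, the coefficients of a nontrivial dependence would give a nonzero codeword supported inside those columns, of weight at most $d-1$. This is a contradiction. A minimum-weight codeword then gives a dependence among exactly $d$ columns.

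\emph{Converse.} Suppose every $d-1$ columns are independent and some $d$ columns are dependent. Any nonzero codeword of weight $w\le d-1$ would give a dependence among $w$ columns, hence among some $d-1$ columns (pad with further columns if $w<d-1$, which is possible since $d-1\le n$ is implicit). This contradicts independence. The dependence among $d$ columns yields a nonzero codeword of weight at most $d$. Its weight is not below $d$ by the previous sentence, so it is exactly $d$, and thus $d(\mathcal{C})=d$.

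The only mildly delicate point is this last one: a dependence among $d$ columns might use fewer than $d$ nonzero coefficients. That would give a codeword of weight less than $d$, which the independence hypothesis already excludes. No other obstacle arises.
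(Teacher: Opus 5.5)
Your proof is correct. The paper does not prove this lemma itself but cites it from Feng's textbook, and your argument ($GH^T=0$ with the rank count $\operatorname{rank}(H)=n-m=\dim\mathcal{C}^\perp$, then codewords read as column dependencies of $H$, with the padding remark for the converse) is the standard proof; your Step~1 is also exactly the reasoning the paper uses in the proof of Theorem~\ref{c2}.
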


For a linear code $\mathcal{C}$ of dimension $m$, the array of all $q^m$ codewords is called a \emph{linear array}. When it has positive OA strength, it is called a \emph{linear OA}. A generator matrix of the code also generates its array, and a parity-check matrix of the array means a parity-check matrix of the code. A general linear code need not yield an OA of positive strength: a zero coordinate in the code gives a constant column. In statements involving both a code and its dual, we assume $0<m<n$ so that both minimum distances are defined. When the dual distance is $1$, strength $0$ denotes only the vacuous empty-column condition.

\begin{lemma}{\rm(\!\!\cite{Hedayat1999})}\label{ca}
If $\mathcal{C}$ is an $[n, m, d]_q$ code over $\mathbb{GF}(q)$ with dual distance $d^{\bot}$, then the codewords of $\mathcal{C}$ form the rows of an {\rm OA}$(q^m, n, q, d^{\perp}-1)$ over $\mathbb{GF}(q)$. Conversely, the rows of a linear {\rm OA}$(q^m,n,q,t)$ over $\mathbb{GF}(q)$ form an $[n, m, d]_q$ linear code over $\mathbb{GF}(q)$ with dual distance $d^{\perp} \geq t+1$. If the orthogonal array has strength $t$ but not $t+1$, $d^{\perp}$ is precisely $t+1$.
\end{lemma}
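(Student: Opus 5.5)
The plan is to deduce Lemma~\ref{ca} from the linear-algebra characterization of strength over a field, i.e., from Construction~\ref{c} specialized to a prime $q$, or more generally from the same counting argument over $\mathbb{GF}(q)$. The key link is the following fact: the array of a linear code with generator matrix $G$ has strength $t$ exactly when every $t$ columns of $G$ are linearly independent. Over a field, $D_t\neq0$ for an $m\times t$ submatrix $D$ means that $D$ has full column rank $t$, so $xD=b$ has exactly $q^{m-t}$ solutions for every $b$. Otherwise the image of $x\mapsto xD$ is a proper subspace, and some $b$ occurs zero times. This is the field version of Lemma~\ref{l1}, and I would prove it directly with rank--nullity so that it also holds for non-prime $q$.

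Next I would translate column independence of $G$ into the weight of dual codewords. A set $T$ of $t$ columns of $G$ is linearly dependent if and only if there is a nonzero $v\in\mathbb{GF}(q)^n$ supported in $T$ with $Gv^T=0$. The latter condition says exactly that $v\in\mathcal{C}^\perp$, since $cv^T=xGv^T$ for all $x$. Hence every $t$ columns of $G$ are independent if and only if $\mathcal{C}^\perp$ has no nonzero word of weight at most $t$, that is, if and only if $d^\perp\geq t+1$. This is the dual counterpart of Lemma~\ref{l7}, obtained by applying that lemma to $\mathcal{C}^\perp$ with $G$ as its parity-check matrix. The two directions of the lemma then follow. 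First, with $t=d^\perp-1$ every $t$ columns are independent, so the codewords form an OA$(q^m,n,q,d^\perp-1)$. Second, a linear OA of strength $t$ has every $t$ columns independent, so $d^\perp\geq t+1$. The dimension $m$ comes from $q^m$ rows, because the rows of a code with $m$ basis vectors are distinct.

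For the sharpness claim, suppose the array does not have strength $t+1$. Then by the characterization in the first step some $t+1$ columns are dependent, so there is a dual word of weight at most $t+1$. Combined with $d^\perp\geq t+1$, this gives $d^\perp=t+1$.

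The main obstacle is the boundary case $d^\perp=1$, where strength $0$ must be read as the vacuous condition the paper mentions. The other point needing care is the converse direction: there I must use that the "linear OA" really is the full codeword set of a subspace of dimension $m$, so that a generator matrix exists and the counting applies.
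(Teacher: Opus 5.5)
The paper gives no proof of this lemma; it imports the result directly from \cite{Hedayat1999}. Your argument is correct and is the standard linear-algebra proof of the linear case. You use a field version of Lemma~\ref{l1}, via rank--nullity, to show that the codeword array has strength $t$ exactly when every $t$ columns of a generator matrix $G$ are independent. You then observe that a dependency among the columns indexed by $T$ is the same thing as a nonzero word of $\mathcal{C}^\perp$ supported in $T$. From these two facts, both the bound $d^\perp\geq t+1$ and its sharpness when strength $t+1$ fails follow at once.

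Compared with the bare citation, your route gives a self-contained derivation that relies only on ideas already present in the paper, namely Construction~\ref{c} and Lemma~\ref{l7}. Because you do the counting with rank--nullity rather than through the $\mathbb{Z}_d$ minor condition, it also holds for every prime power $q$, not only for primes where $\mathbb{GF}(q)=\mathbb{Z}_q$. What it does not cover is the nonlinear version of the theorem, where the dual distance is defined through the MacWilliams transform of the distance distribution and character sums are needed. The lemma as stated concerns only linear codes and linear OAs, so nothing is lost.
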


Using the equivalence between irredundancy and minimum distance, Bajalan and Boyvalenkov investigated how linear codes and their duals can give rise to IrOAs \cite{Bajalan}.  They also constructed new families of IrOAs from well-known classes of codes.

\begin{lemma}{\rm(\!\!  \cite{Bajalan})}\label{IrOA}
 Let $\mathcal{C}$ be an $[n, m, d]_q$ code with $0<m<n$, $d\geq2$, and $d^{\perp}\geq2$, and let $\mathcal{C}^{\perp}$
  be its dual code. Consider the arrays at their respective maximal strengths $d^{\perp}-1$ and $d-1$. Then $n\geq2\min\{d-1,d^{\perp}-1\}$, and the following hold:

$(1)$ The {\rm OA} from $\mathcal{C}$ $($or $\mathcal{C}^{\perp})$ is an {\rm IrOA} if and only if $d\geq d^{\perp}$ $($or $d^{\perp}\geq d); $

~~~~In particular, if $d^{\perp}\neq d$, then exactly one of the {\rm OAs} from $\mathcal{C}$ and $\mathcal{C}^{\perp}$ respectively
 is an  {\rm IrOA} $($the one with greater minimum distance$)$.

$(2)$ The {\rm OAs} from $\mathcal{C}$ and $\mathcal{C}^{\perp}$ respectively are {\rm IrOAs} if and only if $d^{\perp}= d$.
\end{lemma}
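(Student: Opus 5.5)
We need to prove the Bajalan–Boyvalenkov lemma. Its statement has three parts: the inequality $n\geq2\min\{d-1,d^{\perp}-1\}$, part (1), and part (2).

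\textbf{Reduction to distances.} The OA from $\mathcal{C}$ has maximal strength $k=d^{\perp}-1$ by Lemma~\ref{ca}. Since the rows of a linear array are the distinct codewords, its minimum distance as an array is exactly $d$. By Lemma~\ref{l5}, this OA is an IrOA if and only if $d>k=d^{\perp}-1$, that is, $d\geq d^{\perp}$.

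\textbf{Dual side.} Since $(\mathcal{C}^\perp)^\perp=\mathcal{C}$, the array of $\mathcal{C}^\perp$ is an OA of strength $d-1$, again by Lemma~\ref{ca}. Its minimum distance is $d^\perp$. Hence it is irredundant if and only if $d^\perp\geq d$. This proves the first sentence of (1).

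The "in particular" clause follows because if $d\neq d^\perp$, exactly one of $d>d^\perp$ or $d^\perp>d$ holds. In that case the IrOA is the array of the code with the larger minimum distance. Part (2) is the conjunction of the two criteria, $d\geq d^\perp$ and $d^\perp\geq d$, which is equivalent to $d=d^\perp$.

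The hypotheses $d\geq2$ and $d^\perp\geq2$ guarantee that both strengths are positive, so Lemma~\ref{l5} applies in the meaningful range. One should check that Lemma~\ref{l5} is applied with the strength taken as maximal. In fact Lemma~\ref{l5} holds for any stated strength, so this causes no problem.

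\textbf{The inequality.} Assume without loss of generality that $d\geq d^\perp$, and set $k=d^\perp-1$. By the above, the array is an IrOA$(q^m,n,q,k)$. Irredundancy requires that all rows of every $(n-k)$-column subarray are distinct. This gives $q^m\leq q^{n-k}$, i.e. $m\leq n-k$.

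Strength $k$ alone only gives $m\geq k$, so we need a sharper argument. I would use the Singleton bound on both codes:
\[
d\leq n-m+1, \qquad d^\perp\leq m+1.
\]
Adding these gives $d+d^\perp\leq n+2$. Hence
\[
2\min\{d-1,d^\perp-1\}\leq (d-1)+(d^\perp-1)\leq n.
\]

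\textbf{Main obstacle.} The main obstacle is this inequality. The Singleton bound has not been stated in the paper. I would include its short proof: deleting $d-1$ coordinates is injective on $\mathcal{C}$, so $q^m\leq q^{n-d+1}$. The same argument applies to $\mathcal{C}^\perp$, which has dimension $n-m$.
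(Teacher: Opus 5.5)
Your proof is correct. The paper gives no proof of its own here, since the lemma is quoted from Bajalan--Boyvalenkov. However, the remark immediately after it derives the irredundancy criterion from Lemmas~\ref{l5} and~\ref{ca}, exactly as you do for parts (1) and (2): the array of $\mathcal{C}$ has maximal strength $d^{\perp}-1$ and minimum distance $d$, and symmetrically for $\mathcal{C}^{\perp}$.

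For the inequality, your Singleton argument is valid and proves the stronger additive bound $n\geq (d-1)+(d^{\perp}-1)$. It also avoids first identifying which of the two arrays is irredundant.

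Your remark that ``strength $k$ alone only gives $m\geq k$, so we need a sharper argument'' is mistaken, though. With $k=\min\{d-1,d^{\perp}-1\}$ and the corresponding array an IrOA of strength $k$, you had already shown both $q^k\leq q^m$ and $q^m\leq q^{n-k}$. Together these give $n\geq 2k$ immediately, so the detour was unnecessary. This is the standard counting bound for any IrOA$(r,N,q,k)$, namely $q^k\leq r\leq q^{N-k}$. It needs neither linearity nor the Singleton bound, but it yields only the weaker $2\min$ form.
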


At a common, possibly smaller strength $t$, Lemmas~\ref{l5} and~\ref{ca} give the separate criterion
\[
\text{both arrays are IrOAs of strength }t
\quad\Longleftrightarrow\quad
1\leq t\leq\min\{d-1,d^{\perp}-1\}.
\]
This distinction between maximal and prescribed strengths will be used below.

From Theorem~\ref{t3} and the relationship between linear OAs and linear codes, we obtain the following result.

\begin{theorem}\label{c2}
Let $m,n\in\mathbb{N}^+$ with $m<n$, and $q$ be a prime. Let $G = (I_m \mid G')$ be a matrix over $\mathbb{GF}(q)$ satisfying the $LU_F$ conditions. Then the linear {\rm OA} $A$ generated by $G$ is $LU_F$ equivalent to an {\rm OA} $A'$ generated by $H$, where $H$ is the parity-check matrix of $A$. Moreover, the linear codes corresponding to $A$ and $A'$ are duals.
\end{theorem}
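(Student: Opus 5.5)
The plan is to reduce the statement to Theorem~\ref{t3} combined with Lemma~\ref{l7}. Since $q$ is prime, $\mathbb{GF}(q)=\mathbb{Z}_q$, so the Fourier matrix $U_q$ from Section~2 acts on the cyclic additive group of the field. All arithmetic in Theorems~\ref{t1} and~\ref{t3} (modulo $q$) therefore coincides with field arithmetic. In particular, the hypothesis that $G=(I_m\mid G')$ satisfies the $LU_F$ conditions over $\mathbb{Z}_q$ is exactly the hypothesis of Theorem~\ref{t3} with $d=q$.

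First, I will apply Theorem~\ref{t3}. It gives that $U_q^{\otimes n}|\psi_A\rangle$ is the normalized equal-amplitude state of the array $A'$ generated by $H=(-(G')^T\mid I_{t})$, where $t=n-m$. Hence $A$ and $A'$ are $LU_F$ equivalent. By Theorem~\ref{t1}, $A'$ is an OA of strength at least $1$ with distinct rows, so it is an OA in the required sense.

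Second, I will identify $H$ with a parity-check matrix. The array $A$ consists of all codewords $xG$ with $x\in\mathbb{GF}(q)^m$, so $A$ is the linear array of the code $\mathcal{C}$ generated by $G$. Because $0<m<n$, Lemma~\ref{l7} says that $(-(G')^T\mid I_{n-m})$ is a parity-check matrix of $\mathcal{C}$, i.e. a generator matrix of $\mathcal{C}^\perp$. This is the same $H$ as in the first step.

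Third, I will conclude the duality claim. The rows of $A'$ are exactly the vectors $bH$ with $b\in\mathbb{GF}(q)^{t}$. These form the row space of $H$, which is $\mathcal{C}^\perp$. I will verify this directly: $GH^T=-G'+G'=0$, and $H$ has rank $t=n-m=\dim\mathcal{C}^\perp$ thanks to its identity block. So the linear code of $A'$ is $\mathcal{C}^\perp$, the dual of the code of $A$.

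The only delicate point is the identification of $\mathbb{Z}_q$ with $\mathbb{GF}(q)$. This requires primality, so that the character sum in the proof of Theorem~\ref{t1} matches the field structure. Once that is granted, the argument is bookkeeping.
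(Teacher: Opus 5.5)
Your proposal is correct and follows essentially the same route as the paper. Both invoke Theorem~\ref{t3} to obtain the Fourier partner generated by $H=(-(G')^T\mid I_{n-m})$. Both then use $GH^T=0$ together with $\operatorname{rank}(H)=n-m$ to identify its row space with $\mathcal{C}^\perp$, with Lemma~\ref{l7} naming $H$ as a parity-check matrix.
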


\Proof By Theorem~\ref{t3}, the Fourier partner of $A$ is generated by $H=(-(G')^T\mid I_{n-m})$. Since $GH^T=0$ and $H$ has rank $n-m$, its row space is exactly the dual of the row space of $G$. Lemma~\ref{l7} identifies $H$ as a parity-check matrix.\qed

We next study $LU$  equivalence of OAs arising from
 MDS codes, Hamming codes, Reed--Muller codes, cyclic codes and negacyclic codes, respectively.

\subsection{LU equivalence of OAs from MDS codes}

By the Singleton bound, if an $[n, m, d]_q$ code exists,  then $n \geq m + d - 1$. In particular, a code that meets this bound, i.e., $n = m + d-1$, is called a \emph{maximum distance separable} (MDS) \emph{code}.

\begin{lemma}{\rm (\!\! \cite{Feng})}\label{l8}
Let $\mathcal{C}$ be an $[n,m,d]_q$ code, and let $G$ and $H$ be its generator matrix and parity-check matrix, respectively. Then the following four conditions are equivalent:

 $(1)$ $\mathcal{C}$ is an {\rm MDS}  code.

$(2)$ Any $m$ columns of $G$ are linearly independent.

$(3)$ Any $n-m$ columns of $H$ are linearly independent.

$(4)$  $\mathcal{C}^\perp$ is an   {\rm MDS}  code.
\end{lemma}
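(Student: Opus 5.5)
The plan is a cycle of implications $(1)\Rightarrow(2)\Rightarrow(4)\Rightarrow(3)\Rightarrow(1)$, working throughout with the basic fact that a linear code with parity-check matrix $H$ has minimum distance $d$ if and only if every $d-1$ columns of $H$ are linearly independent and some $d$ columns are dependent (Lemma~\ref{l7}; this holds for any parity-check matrix, since a codeword of weight $w$ is exactly a dependency among $w$ columns of $H$). The Singleton bound $d\le n-m+1$ is also used.

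\emph{Step $(1)\Leftrightarrow(3)$.} $\mathcal{C}$ is MDS means $d=n-m+1$. By the column criterion applied to $H$, this holds if and only if every $n-m$ columns of $H$ are independent. The Singleton bound automatically supplies a dependent set of size $n-m+1$, since $H$ has only $n-m$ rows. Both directions are immediate.

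\emph{Step $(2)\Leftrightarrow(4)$.} $G$ is a parity-check matrix of $\mathcal{C}^\perp$, because $(\mathcal{C}^\perp)^\perp=\mathcal{C}$. Also, $\mathcal{C}^\perp$ is an $[n,n-m,d^\perp]_q$ code, so it is MDS if and only if $d^\perp=m+1$. By the same argument as the previous step, this is equivalent to every $m$ columns of $G$ being independent.

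\emph{Step $(1)\Leftrightarrow(4)$.} This closes the cycle and is the only nontrivial link. I would prove $(1)\Rightarrow(2)$ directly and obtain the rest by symmetry. Suppose some $m$ columns of $G$, indexed by $S$, are dependent. Then the $m\times m$ submatrix $G_S$ is singular, so there is a nonzero $x$ with $xG_S=0$. The codeword $xG\neq 0$ (since $G$ has full rank) vanishes on $S$. Hence it has weight at most $n-m<d$, contradicting the MDS property. Applying this to $\mathcal{C}^\perp$ gives $(4)\Rightarrow(3)$, using that $H$ generates $\mathcal{C}^\perp$ with $n-m$ rows. Combined with $(3)\Leftrightarrow(1)$ and $(2)\Leftrightarrow(4)$, all four conditions are equivalent.

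The main point needing care is this last step, namely turning a column dependency in the generator matrix into a low-weight codeword. It relies on the nonzero $x$ yielding a nonzero codeword, which follows from $G$ having full row rank $m$.
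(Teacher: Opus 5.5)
Your proof is correct. The paper gives no proof of this lemma; it is quoted from Feng's textbook, so there is no in-paper argument to match, and what you wrote is a standard proof.

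The two column-criterion steps, $(1)\Leftrightarrow(3)$ and $(2)\Leftrightarrow(4)$, are sound. You were right to justify the criterion for an \emph{arbitrary} parity-check matrix: Lemma~\ref{l7} in the paper is stated only for the systematic $H=(-(G')^T\mid I_{n-m})$, whereas this statement allows any $G$ and $H$.

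The cross link $(1)\Rightarrow(2)$ is also valid. A singular $m\times m$ column submatrix $G_S$ yields a nonzero codeword $xG$ vanishing on $S$, and this codeword has weight at most $n-m<d$. Applying the same argument to $\mathcal{C}^\perp$, with generator $H$ of full row rank $n-m$, gives $(4)\Rightarrow(3)$, so the cycle $(1)\Rightarrow(2)\Rightarrow(4)\Rightarrow(3)\Rightarrow(1)$ closes.

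Two side remarks:
\begin{itemize}
\item The same weight argument read in reverse gives $(2)\Rightarrow(1)$ directly. If every $G_S$ is invertible, no nonzero $xG$ can vanish on $m$ coordinates, so $d\geq n-m+1$. This would let you skip the detour through the dual for that direction.
\item The other common textbook route puts $G$ in systematic form $(I_m\mid G')$. It then shows that $(2)$ and $(3)$ are each equivalent to every square submatrix of $G'$ being nonsingular. This makes the symmetry $G'\leftrightarrow-(G')^T$, and hence $(1)\Leftrightarrow(4)$, immediate. The paper's proof of Lemma~\ref{mds-oa} uses exactly the $1\times1$ case of this.
\end{itemize}
Your version has the advantage of needing no systematic form at all.
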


From Lemma~\ref{l8}, if $\mathcal{C}$ is an $[n, m, d]_q$ MDS code, then $\mathcal{C}^\perp$ is an $[n, n-m, m+1]_q$ MDS code.
 The MDS property yields the following result.

\begin{lemma}\label{mds-oa}
Let $m,n\in\mathbb{N}^+$ with $m<n$ and $q$ be a prime, and let $G = (I_m \mid G')$ be an $m \times n$ matrix over $\mathbb{GF}(q)$. If any $m$ columns of $G$ are linearly independent, then $G$ generates an $\mathrm{OA}(q^m, n, q, m)$, which is $LU_F$ equivalent to an $\mathrm{OA}(q^t, n, q, t)$, where $t = n - m$. The two {\rm OAs}   are both irredundant at their respective $\mathrm{OA}$ strengths $m$ and $t$ if and only if $m=t$.
\end{lemma}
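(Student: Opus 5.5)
The plan is to reduce everything to the MDS machinery and to Theorem~\ref{c2}. Since $q$ is prime, $\mathbb{GF}(q)=\mathbb{Z}_q$. For an $m\times m$ submatrix $D$ of $G$, linear independence of its columns means $|D|\neq0$, hence $|D|\in\mathbb{Z}_q^*$ and $D_m\in\mathbb{Z}_q^*$. Construction~\ref{c} then shows that $G$ generates an $\mathrm{OA}(q^m,n,q,m)$; strength $k=m$ is the relevant case of Condition~(1) of Theorem~\ref{t1}. For Condition~(2), fix row $i$ of $G'$ and suppose it were zero. Then row $i$ of $G$ is supported only on column $i$. Any $m$ columns of $G$ avoiding column $i$ (these exist since $n>m$) would have a zero row, contradicting their independence. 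So each row of $G'$ has a nonzero entry, which is a unit in the field. Thus $G$ satisfies the $LU_F$ conditions at strength $m$.

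Next I would apply Theorem~\ref{t3}, or Theorem~\ref{c2}. These give that the array is $LU_F$ equivalent to the array $A'$ generated by $H=(-(G')^T\mid I_t)$, and that $H$ is a parity-check matrix of the code $\mathcal{C}$ spanned by $G$. By Lemma~\ref{l8}, condition (2) implies that $\mathcal{C}$ is MDS. Condition (3) then says that any $t=n-m$ columns of $H$ are independent, so every $t\times t$ minor of $H$ selected by columns is a unit. Construction~\ref{c} applied to $H$, an $n-m$ by $n$ matrix with distinct rows because of the identity block, then gives an $\mathrm{OA}(q^t,n,q,t)$. Alternatively, Lemma~\ref{ca} gives this directly, since $\mathcal{C}^\perp$ has dual distance $d(\mathcal{C})=n-m+1=t+1$.

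For irredundancy, I would note that $\mathcal{C}$ is $[n,m,t+1]_q$ and $\mathcal{C}^\perp$ is $[n,t,m+1]_q$. So $d=t+1$ and $d^\perp=m+1$, and both are at least $2$ because $0<m<n$. The strengths $m=d^\perp-1$ and $t=d-1$ are exactly the maximal strengths. Lemma~\ref{IrOA}(2) says both arrays are IrOAs at these strengths if and only if $d=d^\perp$, that is, $m=t$. Equivalently, by Lemma~\ref{l5}, the array from $G$ is irredundant at strength $m$ iff $t+1>m$, and the array from $H$ is irredundant at strength $t$ iff $m+1>t$; both hold iff $m=t$.

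The main obstacle is the bookkeeping that $m$ and $t$ are precisely the maximal strengths, which is needed for Lemma~\ref{IrOA} to apply. This follows because the MDS dual distances are exact, so both directions of the equivalence go through. The other point needing care is the deduction of Condition~(2) of Theorem~\ref{t1} from the MDS hypothesis, which I handle with the zero-row argument above.
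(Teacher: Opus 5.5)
Your proof is correct and follows the paper's argument essentially step for step: you verify the $LU_F$ conditions at strength $m$, invoke Theorems~\ref{t3} and~\ref{c2} to identify the Fourier partner with the dual-code array generated by $H$, use Lemma~\ref{l8} to get strength $t$, and compare the minimum distances $t+1$ and $m+1$ via Lemma~\ref{l5}. The only local difference is in checking Condition~(2). The paper notes that each $g'_{ij}$ equals, up to sign, the $m\times m$ minor obtained by replacing identity column $i$ with column $j$ of $G'$, so every entry of $G'$ is a unit. Your zero-row argument shows only that each row of $G'$ has a nonzero entry, which is all Condition~(2) requires.
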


\Proof Every entry $g'_{ij}$ is nonzero: replacing identity column $i$ by column $j$ of $G'$ gives an $m\times m$ minor equal to $\pm g'_{ij}$, which must be nonzero. Since $q$ is prime, these entries are units. Together with the assumed independence of every $m$ columns, this verifies the $LU_F$ conditions at strength $m$. Then $G$ generates a linear $\mathrm{OA}(q^m, n, q, m)$ $A$, which is $LU_F$ equivalent to an $\mathrm{OA}(q^t, n, q, 1)$ $A'$ by Theorem~\ref{t1}, and $A'$ has a generator matrix $H=(-(G')^{T} \mid I_t)$ by Theorem~\ref{t3}.

 By Theorem~\ref{c2}, we see that the linear codes corresponding to $A$ and $A'$ are duals. According to Lemma~\ref{l8}, these two codes are both MDS codes, and any $t$ columns of $H$ are linearly independent, where $t = n - m$. This implies that the OA $A'$ generated by $H$ is an $\mathrm{OA}(q^t, n, q, t)$. The code generated by $G$ has minimum distance $t+1$, and its dual has minimum distance $m+1$. By Lemma~\ref{l5}, the first array is irredundant at strength $m$ exactly when $t\geq m$, while the second is irredundant at strength $t$ exactly when $m\geq t$. Thus both are irredundant at these respective strengths if and only if $m=t$.\qed

\begin{theorem}\label{t8}
For a prime $q$ and $m,n\in \mathbb{N}^+$ with $1 \leq m<n \leq q$, there exists an $\mathrm{OA}(q^m, n, q, m)$, which is $LU_F$ equivalent to an $\mathrm{OA}(q^t, n, q, t)$, where $t = n - m$. The two $\mathrm{OA}$s are both irredundant at their respective $\mathrm{OA}$ strengths $m$ and $t$ if and only if $m=t$.
\end{theorem}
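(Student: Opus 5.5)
The plan is to reduce everything to Lemma~\ref{mds-oa}. That lemma needs, for prime $q$ and $1\le m<n\le q$, an $m\times n$ matrix $G=(I_m\mid G')$ over $\mathbb{GF}(q)$ in which every $m$ columns are linearly independent. I will build $G$ from a generator matrix of a Reed--Solomon code, which is MDS, and then reduce it to systematic form.

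First I construct the MDS matrix. Since $n\le q$, I can choose $n$ distinct elements $\alpha_1,\dots,\alpha_n\in\mathbb{GF}(q)$, for instance $0,1,\dots,n-1$. Let $V$ be the $m\times n$ Vandermonde matrix with $(i,j)$ entry $\alpha_j^{\,i-1}$. Any $m$ columns of $V$ form a square Vandermonde matrix with distinct nodes. Its determinant is $\prod_{j<l}(\alpha_l-\alpha_j)\neq0$, so every $m$ columns of $V$ are linearly independent.

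Second, I pass to systematic form. The first $m$ columns of $V$ form an invertible matrix $V_1$, so left multiplication by $V_1^{-1}$ gives $G=V_1^{-1}V=(I_m\mid G')$. Left multiplication by an invertible matrix preserves linear independence of every set of columns, so every $m$ columns of $G$ remain independent. No column permutation is needed here, which keeps the party order fixed.

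Third, I apply Lemma~\ref{mds-oa} to $G$. It gives that $G$ generates an $\mathrm{OA}(q^m,n,q,m)$ which is $LU_F$ equivalent to an $\mathrm{OA}(q^t,n,q,t)$ with $t=n-m$. The same lemma also gives the irredundancy statement: both arrays are irredundant at their respective strengths if and only if $m=t$. This completes the theorem.

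The only delicate point is the existence of $n$ distinct field elements, and the hypothesis $n\le q$ is exactly what provides them. The conclusions about strength and irredundancy are inherited directly from Lemma~\ref{mds-oa}. That lemma already checks the $LU_F$ conditions, namely that every entry of $G'$ is a unit, so nothing further has to be verified.
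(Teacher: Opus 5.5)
Your proposal is correct and is essentially the paper's own proof. Both take a Vandermonde matrix on $n\le q$ distinct elements of $\mathbb{GF}(q)$, reduce it to systematic form $(I_m\mid G')$ by left multiplication with the inverse of its first $m$ columns, and then apply Lemma~\ref{mds-oa}, which supplies the $LU_F$ conditions, the strengths $m$ and $t$, and the irredundancy criterion $m=t$.
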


\Proof Let $G$ be an $m \times n$ Vandermonde matrix over $\mathbb{GF}(q)$,
$$
G = \begin{pmatrix}
1 & 1 & \cdots & 1 \\
\alpha_1 & \alpha_2 & \cdots & \alpha_n \\
\alpha_1^2 & \alpha_2^2 & \cdots & \alpha_n^2 \\
\vdots & \vdots & \ddots & \vdots \\
\alpha_1^{m-1} & \alpha_2^{m-1} & \cdots & \alpha_n^{m-1}
\end{pmatrix},
$$
where $\alpha_1,\alpha_2,\dots,\alpha_n$ are distinct elements of $\mathbb{GF}(q)$ and $1\leq m<n\leq q$.
The determinant of any $m$-column submatrix is a product of nonzero differences between evaluation points, so every such submatrix is invertible. Multiplying $G$ on the left by the inverse of its first $m$ columns gives a systematic matrix $(I_m\mid G')$. By Lemma~\ref{mds-oa}, $G$ generates an $\mathrm{OA}(q^m,n,q,m)$ whose Fourier partner is an $\mathrm{OA}(q^t,n,q,t)$, with $t=n-m$. The two arrays are both irredundant at their respective maximal strengths if and only if $m=t$.\qed

\subsection{LU equivalence of OAs from Hamming codes}

Let $m \geq 2$. The nonzero vectors in $\mathbb{GF}(q)^m$ can be partitioned into equivalence classes under the relation: two nonzero vectors $v_1$ and $v_2$ of $\mathbb{GF}(q)^m$ are linearly dependent if and only if there exists $\alpha \in \mathbb{GF}(q)^*$ such that $v_1 = \alpha v_2$. This is an equivalence relation. Each equivalence class contains exactly $q-1$ nonzero vectors, so there are $\frac{q^m - 1}{q - 1}=n$ equivalence classes. From each equivalence class, we now choose one representative $u_i$, $i=1,2, \dots, n$. Further, we arrange these $n$ vectors as columns to form an $m \times n$ matrix over $\mathbb{GF}(q)$
$$H_m = (u_1^T, u_2^T, \dots, u_n^T).$$
Note that the vectors $u_1=(1, 0, \dots, 0), u_2=(0, 1, \dots, 0), \dots, u_m=(0, 0, \dots, 1)$ belong to $m$ distinct equivalence classes. Thus,  $H_m$ can be written as $$H_m = (I_m \mid P),$$
where $P$ is an $m \times (\frac{q^m - 1}{q - 1}-m)$ matrix.
Any two columns of $H_m$ are linearly independent, because they represent distinct one-dimensional subspaces. Each row of $P$ is nonzero: for each $i$, a representative proportional to $u_i+u_j$, with $j\neq i$, occurs among its columns. In particular, when $q$ is prime, $H_m$ satisfies the $LU_F$ conditions at strength $2$.

The linear code with $H_m$ as its parity-check matrix is called a \emph{Hamming code}. A Hamming code is a $[\frac{q^m - 1}{q - 1}, \frac{q^m - 1}{q - 1} - m, 3 ]_q$ code. In particular,
a binary Hamming code has parameters $[2^m-1,2^m-m-1,3]$ and dual distance $2^{m-1}$
\cite{Feng}.

The binary Hamming-code parameters yield the following result.

\begin{theorem}\label{t9}
For any $m \geq 2$, there exists an $\mathrm{OA}(2^m, 2^m - 1, 2, 2)$, which is $LU_F$ equivalent to an $\mathrm{OA}(2^k, 2^m - 1, 2, 2^{m -1} - 1)$, where $k =2^m-m-1$. These two {\rm OAs}   are both {\rm IrOAs}  of strength $t$ if and only if $1\leq t\leq \min\{2, 2^{m -1} - 1\}$.
\end{theorem}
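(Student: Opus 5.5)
We take $G=H_m=(I_m\mid P)$ over $\mathbb{GF}(2)=\mathbb{Z}_2$, with $n=2^m-1$ columns. This is the parity-check matrix of the binary Hamming code, so its row space is the dual Hamming (simplex) code, an $[2^m-1,m,2^{m-1}]$ code.

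\textbf{Strength of $A$.} As noted before the statement, any two columns of $H_m$ are linearly independent and every row of $P$ is nonzero. Hence $H_m$ satisfies the $LU_F$ conditions at strength $2$. Over a prime field, linear independence of two columns means some $2\times2$ minor is nonzero, i.e.\ $D_2\in\mathbb{Z}_2^*$. By Theorem~\ref{t1}, $G$ generates an $\mathrm{OA}(2^m,2^m-1,2,2)$, call it $A$, with distinct rows.

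\textbf{The Fourier partner.} By Theorems~\ref{t3} and~\ref{c2}, $A$ is $LU_F$ equivalent to the array $A'$ generated by $H=(-P^T\mid I_k)=(P^T\mid I_k)$, where $k=n-m=2^m-m-1$. The row space of $H$ is the dual of the simplex code, namely the Hamming code $[2^m-1,2^m-m-1,3]$. So $A'$ has $2^k$ distinct rows. Its strength is read off from Lemma~\ref{ca}: the dual of the Hamming code is the simplex code, with minimum distance $2^{m-1}$. Therefore $A'$ is an $\mathrm{OA}(2^k,2^m-1,2,2^{m-1}-1)$.

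The strength of $A$ can be recovered the same way. The simplex code has dual distance $3$, giving strength $2$ exactly, consistent with the previous step. The one fact not proved in the paper is the minimum distance $2^{m-1}$ of the simplex code; it is cited from \cite{Feng} as the dual distance of the binary Hamming code. If needed, it follows because each nonzero codeword $xH_m$ is nonzero in exactly the columns $u$ with $x\cdot u=1$, and there are $2^{m-1}$ such vectors $u\neq0$.

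\textbf{Irredundancy.} Set $\mathcal{C}$ to be the simplex code, with $d=2^{m-1}$ and $d^\perp=3$. Both arrays are IrOAs of strength $t$ if and only if $t$ does not exceed either array's strength and each minimum distance exceeds $t$ (Lemma~\ref{l5}). This is the displayed criterion after Lemma~\ref{IrOA}: $1\le t\le\min\{d-1,d^\perp-1\}=\min\{2,2^{m-1}-1\}$.

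The main point to handle carefully is the edge case $m=2$. There $\min=1$, $n=3$, $k=1$, and $A'$ is the repetition code with strength $1$, so the formula still holds. The rest is direct bookkeeping with the earlier results.
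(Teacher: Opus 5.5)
Your proof is correct and follows essentially the same route as the paper. You take $G=H_m=(I_m\mid P)$, use Theorem~\ref{c2} to identify its Fourier partner with the Hamming-code array generated by $(P^T\mid I_k)$, read off the strengths $2$ and $2^{m-1}-1$ from Lemma~\ref{ca}, and apply the common-strength IrOA criterion via Lemma~\ref{l5}. Your self-contained check that nonzero simplex codewords have weight $2^{m-1}$, and your explicit treatment of $m=2$, are useful additions to what the paper leaves to a citation.
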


\Proof Let $H_m$ be an $m \times (2^m -1)$ matrix consisting of all nonzero vectors of $\mathbb{GF}(2)^m$ as its columns. Then $H_m$ can be written as $H_m = (I_m \mid P)$. Let $\mathcal{C}$ be the linear code with $H_m$ as its parity-check matrix. That is, $\mathcal{C}$ is a $[2^m -1, 2^m -m - 1,3]$ Hamming code,  and $\mathcal{C}$ has a dual code $\mathcal{C}^{\perp}$ with parameters $[2^m -1, m,2^{m-1}]$ \cite{Feng}. Since $H_m$ satisfies the $LU_F$ conditions, the OAs corresponding to linear codes $\mathcal{C}$ and $\mathcal{C}^{\perp}$ are $LU_F$ equivalent by Theorem~\ref{c2}.

By Lemma~\ref{ca}, the OA generated by $H_m$ has strength $2$. Similarly,   the OA generated by $G_m$ has strength $2^{m-1}-1$, where $G_m$ is the generator matrix of $\mathcal{C}$. Thus, the two OAs are $\mathrm{OA}(2^m, 2^m - 1, 2, 2)$ and $\mathrm{OA}(2^k, 2^m - 1, 2, 2^{m -1} - 1)$ respectively.
Further,  the two OAs are IrOAs of strength $t$ if and only if $1\leq t\leq \min\{2, 2^{m -1} - 1\}$ by Lemmas~\ref{l5} and~\ref{ca}.
\qed

\begin{example}\label{e2}
There exists an {\rm OA}$(2^3, 7, 2, 2)$, which is $LU_F$ equivalent to an $\mathrm{OA}(2^4, 7, 2, 3)$.
\end{example}

\Proof When $m =3$, the $[7, 4, 3]$ Hamming code $\mathcal{C}$ has the following parity-check matrix
$$
H_3 = \begin{pmatrix}
1 & 0 & 0 & 0 & 1 & 1 & 1 \\
0 & 1 &  0 & 1 & 0 & 1 & 1 \\
0 &  0 & 1 & 1 & 1 & 0 & 1
\end{pmatrix}.$$
We can verify that it satisfies the $LU_F$ conditions.
By Lemma~\ref{l7}, $\mathcal{C}$ has a generator matrix
$$
G_3 = \begin{pmatrix}
0 & 1 & 1 & 1 & 0 & 0 & 0 \\
1 & 0 & 1 & 0 & 1 & 0 & 0 \\
1 & 1 & 0 & 0 & 0 & 1 & 0 \\
1 & 1 & 1 & 0 & 0 & 0 & 1
\end{pmatrix}.$$
 It can be checked that any $3$ columns of $G_3$ are linearly independent,  whereas some $4$ columns are linearly dependent.  So $\mathcal{C}$ has dual distance $4$. That is, the dual code $\mathcal{C}^\perp$ is a $[7,3,4]$ code. Thus the {\rm OAs}   corresponding to linear codes $\mathcal{C}^\perp$ and $\mathcal{C}$ are {\rm OA}$(2^3, 7, 2, 2)$ and $\mathrm{OA}(2^4, 7, 2, 3)$ respectively.
Thus, the {\rm OAs}   are $LU_F$ equivalent. Moreover, they are {\rm IrOAs} of strength $2$. This result agrees with the conclusion of Theorem~\ref{t9}.\qed

\subsection{LU equivalence of OAs from Reed--Muller codes}

Reed--Muller (RM) codes have a simple recursive structure and numerous applications. Let $m \geq 1$, $0 \leq r \leq m$.  Define the base cases
$G(0,m)=(1,1,\dots,1)$, a row vector of length $2^m$, and $G(m,m)=I_{2^m}$. The binary Reed--Muller code RM$(r,m)$ is a linear code with parameters $[ 2^m, k, 2^{m-r}]$, defined recursively by the following generator matrix
$$
G(r,m) = \begin{pmatrix}
G(r,m-1) & G(r,m-1) \\
\textbf{0} & G(r-1,m-1)
\end{pmatrix},
$$
for $1\leq r<m$, where $G(r,m)$ is a $k\times 2^{m}$ matrix with $k= \sum_{i=0}^r \binom{m}{i}$ \cite{Rains,Tyagi}. In particular, when $0\leq r\leq m-1$,  RM$(r, m)^\perp=RM(m - r - 1, m)$ \cite{Feng}.

Below, we consider the OAs from binary Reed--Muller codes.
\begin{theorem}\label{t10}
For any $m \geq 2$,  there exists an $\mathrm{OA}(2^{m+1}, 2^m, 2, 3)$, which is $LU_F$ equivalent to an $\mathrm{OA}(2^k, 2^m,2,2^{m-1}-1)$, where $k =2^m-m-1$. These two {\rm OAs}   are both {\rm IrOAs} of strength $t$ if and only if $1\leq t\leq \min\{3, 2^{m -1} - 1\}$.
\end{theorem}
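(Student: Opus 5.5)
We will follow the Hamming-code argument of Theorem~\ref{t9}, using the pair $\mathrm{RM}(1,m)$ and $\mathrm{RM}(m-2,m)=\mathrm{RM}(1,m)^\perp$. First recall the parameters. $\mathrm{RM}(1,m)$ is a $[2^m,m+1,2^{m-1}]$ code. By the duality $\mathrm{RM}(r,m)^\perp=\mathrm{RM}(m-r-1,m)$ with $r=1$, its dual is $\mathrm{RM}(m-2,m)$, with parameters $[2^m,2^m-m-1,4]$. This holds since $\sum_{i=0}^{m-2}\binom{m}{i}=2^m-m-1$ and the minimum distance is $2^{m-(m-2)}=4$. For $m=2$, $\mathrm{RM}(0,2)$ is the repetition code $[4,1,4]$, which is consistent. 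So $d=2^{m-1}$ and $d^\perp=4$ for $\mathcal{C}=\mathrm{RM}(1,m)$.

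Next, Lemma~\ref{ca} gives the two arrays. The codewords of $\mathrm{RM}(1,m)$ form an $\mathrm{OA}(2^{m+1},2^m,2,3)$. The codewords of $\mathrm{RM}(m-2,m)$ form an $\mathrm{OA}(2^{k},2^m,2,2^{m-1}-1)$ with $k=2^m-m-1$.

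To get $LU_F$ equivalence we apply Theorem~\ref{c2}. This needs a systematic generator matrix $(I_{m+1}\mid G')$ of $\mathrm{RM}(1,m)$ satisfying the $LU_F$ conditions, and this is the step needing care.

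\emph{Systematic form.} Over $\mathbb{GF}(2)$, full row rank gives an information set. Concretely, the columns indexed by $0$ and the $m$ unit vectors $e_i$ of $\mathbb{GF}(2)^m$ are linearly independent in the evaluation form $(1,x_1,\dots,x_m)^T$. Permuting these columns to the front and row-reducing yields $(I_{m+1}\mid G')$. Column permutations are allowed, since the Remark after Theorem~\ref{t1} permits admissible conversion to systematic form.

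\emph{Condition (1).} This holds at strength $3$, because any three columns of the generator matrix are linearly independent, as $d^\perp=4$.

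\emph{Condition (2).} We need every row of $G'$ to be nonzero. If row $i$ of $G'$ were zero, the $i$th row of $(I\mid G')$ would be a codeword of weight $1<2^{m-1}$, a contradiction. Hence each row of $G'$ has a $1$, which is a unit.

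Theorem~\ref{c2} then shows that the Fourier partner of the array of $\mathrm{RM}(1,m)$ is generated by its parity-check matrix, i.e.\ it is the array of $\mathrm{RM}(m-2,m)$. The common column permutation is harmless by the earlier remark that it commutes with $U_2^{\otimes n}$.

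Finally, for irredundancy we use the criterion displayed after Lemma~\ref{IrOA}, which follows from Lemmas~\ref{l5} and~\ref{ca}. Both arrays are IrOAs of strength $t$ iff $1\le t\le\min\{d-1,d^\perp-1\}=\min\{2^{m-1}-1,3\}$.

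The main obstacle is making the systematic-form step rigorous while keeping track of the column permutation. We handle it with the explicit information set and the weight argument above.
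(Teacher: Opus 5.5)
Your proof is correct and follows essentially the same route as the paper. Both use $\mathrm{RM}(1,m)$ and its dual $\mathrm{RM}(m-2,m)$, the information set $\{0,e_1,\dots,e_m\}$ for the systematic form, Theorem~\ref{c2} for the Fourier partner, Lemma~\ref{ca} for the two strengths, and the common-strength criterion for irredundancy. The only local differences are in checking the $LU_F$ conditions. You verify Condition~(2) by the weight-$1$ codeword argument, as the paper does in Appendix~\ref{app-F}, and Condition~(1) at strength $3$ via $d^{\perp}=4$. The paper's proof of this theorem instead computes the new top row $\alpha$ explicitly and checks the conditions at strength $1$; either suffices.
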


\Proof RM$(1, m)$ is a $[2^m, m+1, 2^{m-1}]$ code with the generator matrix
$$G = \begin{pmatrix}
    1 & 1 & \cdots & 1 \\
    0 & \multicolumn{3}{c}{} \\
    \vdots & \multicolumn{3}{c}{G_1} \\
    0 & \multicolumn{3}{c}{}
\end{pmatrix},$$
where the columns of $G_1$ consist of all nonzero vectors in $\mathbb{GF}(2)^m$, and $G_1$ is a parity-check matrix for a $[2^m - 1, 2^m - m - 1, 3]$ binary Hamming code \cite{Feng}. We will prove that $G$ satisfies the $LU_F$ conditions.

As shown in the proof of Theorem~\ref{t9}, $G_1$ has full support, and its systematic parity block has no zero row. Write $G_1=(I_m\mid P)$, where $P$ is an $m\times k$ matrix over $\mathbb{GF}(2)$.
After choosing the zero evaluation point followed by the $m$ identity columns and applying invertible row operations, $G$ can be transformed into $(I_{m+1} \mid P')$, where $P' = \begin{pmatrix} \alpha \\ P \end{pmatrix}$, $\alpha \in \mathbb{GF}(2)^k$.
The vector $\alpha$ is obtained by adding all the row vectors of $P$ to an all-ones vector of length $k$.
When $m\geq2$, $P$ has columns containing exactly two ones, so $\alpha$ has a nonzero coordinate. The remaining rows of $P'$ are nonzero because the rows of $P$ are nonzero. Also, $G$ has no zero column, since its original first row consists entirely of ones. Thus its systematic form satisfies the $LU_F$ conditions at strength $1$.

The dual is $\mathrm{RM}(m-2,m)$, with parameters $[2^m,2^m-m-1,4]$, including the repetition-code case $m=2$. Hence Lemma~\ref{ca} gives OA strengths $3$ and $2^{m-1}-1$, respectively. Theorem~\ref{c2} gives their $LU_F$ equivalence, and the common-strength criterion following Lemma~\ref{IrOA} gives exactly $1\leq t\leq\min\{3,2^{m-1}-1\}$.\qed

\noindent \textbf{Remark}
Let $G_1$ and $G$ be the matrices given in Theorem~\ref{t10}.
Denote the OAs generated by $G_1$ and $G$ by $A_1$ and $A$, respectively. Then
 $$A= \begin{pmatrix} \textbf{0}_{2^m} & A_1 \\ \textbf{1}_{2^m} & \overline{A_1} \end{pmatrix}.$$

Let $A_1=C_1G_1$ and $A=CG$, where $C_1$ is a $2^m\times m$ matrix consisting of all vectors of $\mathbb{GF}(2)^m$ as its rows; $C$ is a $2^{m+1} \times (m+1)$ matrix consisting of all vectors of $\mathbb{GF}(2)^{m+1}$ as its rows. Then we have
\begin{align}
A &= CG =  \begin{pmatrix} \textbf{0}_{2^m} & C_1 \\ \textbf{1}_{2^m} & C_1 \end{pmatrix}
\begin{pmatrix}
    1 & 1 & \cdots & 1 \\
    0 & \multicolumn{3}{c}{} \\
    \vdots & \multicolumn{3}{c}{G_1} \\
    0 & \multicolumn{3}{c}{}\end{pmatrix} \nonumber\\
    &= \begin{pmatrix} \textbf{0}_{2^m} & C_1G_1 \\ \textbf{1}_{2^m} & 1+C_1G_1 \end{pmatrix}
     = \begin{pmatrix} \textbf{0}_{2^m} & A_1 \\ \textbf{1}_{2^m} & \overline{A_1} \end{pmatrix}\nonumber,
\end{align}
where $\overline{A_1}=1+C_1G_1$ is the entrywise binary complement of $A_1$.

For this construction, the generator matrix $G$ yields an OA $A$ of strength $3$, as established in Theorem~\ref{t10}.

Let $\mathcal{C}$ and $\mathcal{C}_1$ be the linear codes whose codewords are the rows of $A$ and $A_1$, respectively. Both codes have minimum distance $2^{m-1}$, as the following weight calculation shows.

First, every nonzero codeword of $\mathcal{C}_1$ has weight $2^{m-1}$. Each row of $G_1$ corresponds to a coordinate function $x_i$. A nonzero codeword generated by $G_1$ is the evaluation of a nonzero linear function $f= a_1x_1+a_2x_2+\cdots+a_mx_m$, where $a_i\in \mathbb{GF}(2)$, $i=1,2,\dots,m$ and there exists some $i$ such that $a_i\neq0$. According to Lemma~\ref{l2}, the multiset
\begin{eqnarray*}
\{a_1 x_1 + a_2 x_2 + \cdots + a_m x_m : x_i \in \mathbb{GF}(2), i=1,2,\dots,m \} = 2^{m-1} \mathbb{GF}(2).
\end{eqnarray*}
Consequently, as the vector $(x_1,x_2,\dots,x_m)$ runs through all nonzero vectors in $\mathbb{GF}(2)^m $, the function $f$ takes the value $1$ at exactly $2^{m-1}$ positions, while it takes the value $0$ at exactly $2^{m-1}-1$ positions. This implies that the nonzero codewords in $\mathcal{C}_1$ have weight $2^{m-1}$.  Thus $\mathcal{C}_1$ has minimum distance $2^{m-1}$.

 By construction,
\begin{eqnarray*}
\mathcal{C}=\{(0,c),(1,\overline{c}):c\in\mathcal{C}_1\}.
\end{eqnarray*}
Hence the nonzero codewords in $\mathcal{C}$ have weight $2^{m-1}$ or $2^m$. Thus $\mathcal{C}$ has minimum distance $2^{m-1}$.

\begin{example}
There exists an $\mathrm{OA}(16,8,2,3)$ whose Fourier partner has the same parameters.
\end{example}

\Proof
When $m =3$, the binary {\rm RM} code, {\rm RM}$(1,3)$, has the following generator matrix
$$
 \begin{pmatrix}
1 & 1 & 1 & 1 & 1 & 1 & 1 & 1 \\
0 & 1 & 0 & 0 & 0 & 1 & 1 & 1 \\
0 & 0 & 1 &  0 & 1 & 0 & 1 & 1 \\
0 & 0 &  0 & 1 & 1 & 1 & 0 & 1
\end{pmatrix}.$$
Invertible row operations put this matrix in the form $G=(I_4\mid G')$, where
$$
G' = \begin{pmatrix}
1 & 1 & 1 & 0 \\
0 & 1 & 1 & 1 \\
1 & 0 & 1 & 1 \\
1 & 1 & 0 & 1
\end{pmatrix}.$$
It can be verified that $G$ satisfies the $LU_F$ conditions, and any $3$ columns of $G$ are linearly independent, while some $4$ columns of $G$ are linearly dependent.
By Lemma~\ref{l7}, this code has a parity-check matrix
$$
H=\begin{pmatrix}
1 & 0 & 1 & 1 & 1 & 0 & 0 & 0 \\
1 & 1 & 0 & 1 & 0 & 1 & 0 & 0 \\
1 & 1 & 1 & 0 & 0 & 0 & 1 & 0 \\
0 & 1 & 1 & 1 & 0 & 0 & 0 & 1
\end{pmatrix}.$$
One can verify that any $3$ columns of $H$ are linearly independent, whereas some $4$ columns of $H$ are linearly dependent.

Thus, the {\rm OAs} generated by $G, H$ are $LU_F$ equivalent. Moreover,   they are also {\rm IrOAs} of strength $3$. This result agrees with Theorem~\ref{t10}.\qed

\begin{theorem}\label{RM}
For $r,m\in\mathbb{N}^+$ with $r<m$, there exists an $\mathrm{OA}(2^k,2^m,2,2^{r+1}-1)$, which is $LU_F$ equivalent to an $\mathrm{OA}(2^{k'}, 2^m,2,2^{m-r}-1)$, where $k=\sum_{i=0}^{r}\binom{m}{i}$, $k'=2^m-k$. These two {\rm OAs} are both {\rm IrOAs} of strength $t$ if and only if $1\leq t\leq \min\{2^{r+1}-1, 2^{m-r}-1\}$.
\end{theorem}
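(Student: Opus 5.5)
The plan follows Theorem~\ref{t10}, replacing the first-order code by a general $\mathrm{RM}(r,m)$. Let $\mathcal{C}=\mathrm{RM}(r,m)$, with parameters $[2^m,k,2^{m-r}]$ and $k=\sum_{i=0}^r\binom{m}{i}$. For $1\leq r\leq m-1$ we have $\mathcal{C}^\perp=\mathrm{RM}(m-r-1,m)$, whose minimum distance is $2^{r+1}$ and whose dimension is $2^m-k=k'$. By Lemma~\ref{ca}, the codewords of $\mathcal{C}$ form an $\mathrm{OA}(2^k,2^m,2,2^{r+1}-1)$, since $d^\perp=2^{r+1}$. Likewise the codewords of $\mathcal{C}^\perp$ form an $\mathrm{OA}(2^{k'},2^m,2,2^{m-r}-1)$, since the dual of $\mathcal{C}^\perp$ is $\mathcal{C}$ with distance $2^{m-r}$. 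Both strengths are at least $1$ because $r\geq1$ and $r<m$.

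\textbf{Step 1: systematic form.} Over $\mathbb{GF}(2)$ the generator matrix $G(r,m)$ has full row rank $k$. Hence some $k$ columns are independent, and column permutations together with invertible row operations put it in the form $(I_k\mid G')$. Common column permutations only relabel parties, and the paper notes that this preserves any established $LU$ relation. This disposes of Condition~(1) of the $LU_F$ conditions at strength $1$: each single column is nonzero, since $\mathrm{RM}(r,m)$ contains the all-ones word and so has no zero coordinate.

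\textbf{Step 2: rows of $G'$ are nonzero.} This is the main obstacle. Suppose row $i$ of $G'$ is zero. Then the $i$-th row of $(I_k\mid G')$ is a codeword of weight $1$. That contradicts the minimum distance $2^{m-r}\geq2$, which holds because $r<m$. So Condition~(2) holds, and $(I_k\mid G')$ satisfies the $LU_F$ conditions at strength $1$.

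\textbf{Step 3: conclude.} Theorem~\ref{c2} with $q=2$ now shows that the array of $\mathcal{C}$ is $LU_F$ equivalent to the array generated by its parity-check matrix, that is, to the array of $\mathcal{C}^\perp$. This yields the claimed pair of OAs with the stated parameters. For the irredundancy claim, I apply the common-strength criterion stated after Lemma~\ref{IrOA}, which combines Lemmas~\ref{l5} and~\ref{ca}. With $d=2^{m-r}$ and $d^\perp=2^{r+1}$, both arrays are IrOAs of strength $t$ if and only if $1\leq t\leq\min\{2^{r+1}-1,2^{m-r}-1\}$.

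The only delicate point is the bookkeeping of which array carries which strength, because the array of $\mathcal{C}$ gets its strength from $d^\perp$. The weight-$1$ argument in Step~2 sidesteps any explicit computation of $G'$.
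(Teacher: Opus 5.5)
Your proof is correct and follows the paper's argument essentially step for step: a coordinate permutation to reach systematic form, the weight-$1$ argument ruling out zero rows of $G'$, Theorem~\ref{c2} for the Fourier partner, Lemma~\ref{ca} for the two strengths, and the common-strength criterion for irredundancy. The one difference is in Condition~(1): the paper verifies it at strength $2^{r+1}-1$ via Lemma~\ref{l7} applied to the dual, while you verify it at strength $1$ using the all-ones word; either works, because Theorem~\ref{c2} needs the $LU_F$ conditions only at some strength and the actual OA strengths come from Lemma~\ref{ca}.
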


\Proof See Appendix~\ref{app-F}.\qed

\subsection{LU equivalence of OAs from cyclic and negacyclic codes}
Cyclic codes are an important subclass of linear codes, characterized by closure under cyclic shifts.
A linear code $\mathcal{C}$ with parameters $[n,m,d]_q$ is called a \emph{cyclic code} if for every codeword $c =(c_0, c_1, \dots, c_{n-1})\in \mathcal{C}$, its cyclic shift $c' =(c_{n-1}, c_0, \dots, c_{n-2})$ is also in $\mathcal{C}$.

Identify each codeword $c= (c_0, c_1, \dots, c_{n-1}) \in \mathcal{C}$ with the polynomial
$$c(x) = c_0 + c_1 x + \dots + c_{n-1} x^{n-1},$$
then $\mathcal{C}$ is cyclic if and only if for every $c(x) \in \mathcal{C}$, the polynomial $x\cdot c(x)\bmod(x^n-1)$ also belongs to $\mathcal{C}$. This implies that $\mathcal{C}$ forms an ideal in the ring $\mathbb{GF}(q)[x]/(x^n-1)$. It is known that
a nonzero cyclic code has a unique monic polynomial $g(x)$ of minimum degree, called the \emph{generator polynomial} of $\mathcal{C}$.
The polynomial $h(x) = (x^n - 1)/g(x)$ is called the \emph{parity-check polynomial} of $\mathcal{C}$, where $\deg g(x) = n- m$ and $\deg h(x) = m$\cite{Feng}.

Negacyclic codes are defined analogously.
A linear code $\mathcal{C} \subseteq \mathbb{GF}(q)^n$ is called a \emph{negacyclic code} if for every codeword
$c = (c_0, c_1, \ldots, c_{n-1}) \in \mathcal{C}$, its negacyclic shift $(-c_{n-1}, c_0, c_1, \ldots, c_{n-2})$ is also in $\mathcal{C}$.

\begin{theorem}\label{t11}
Let $q$ be prime, and let $\mathcal{C}\subseteq\mathbb{GF}(q)^n$ be a cyclic or negacyclic code with $0<\dim\mathcal{C}<n$. The distinct-row arrays of $\mathcal{C}$ and its Euclidean dual are OAs of positive strength and are $LU_F$ equivalent.
\end{theorem}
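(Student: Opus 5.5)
The plan is to reduce Theorem~\ref{t11} to Theorem~\ref{c2}. The cyclic or negacyclic structure is used only to rule out identically zero coordinates, in $\mathcal{C}$ and in $\mathcal{C}^\perp$. Let $m=\dim\mathcal{C}$, so $0<m<n$ and $\dim\mathcal{C}^\perp=n-m>0$.

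\emph{Step 1: no zero coordinates.} The shift $\sigma$ acts on the coordinates as the cyclic permutation $i\mapsto i+1$, multiplied by $-1$ in the wrap-around position in the negacyclic case. It maps $\mathcal{C}$ onto itself and acts transitively on positions. Suppose coordinate $i$ vanished on every codeword. Applying powers of $\sigma$, every coordinate would vanish, forcing $\mathcal{C}=\{0\}$. This contradicts $m>0$.

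The Euclidean dual of a cyclic code is cyclic. The dual of a negacyclic code is negacyclic: for $v\in\mathcal{C}^\perp$ and $c\in\mathcal{C}$, the inner product of $\sigma(v)$ with $c$ equals that of $v$ with $\sigma^{-1}(c)\in\mathcal{C}$, up to the sign bookkeeping. Hence the same argument shows $\mathcal{C}^\perp$ has no zero coordinate.

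Thus $d^\perp\geq2$ and $d\geq 2$, where $d=d(\mathcal{C})$ is the dual distance of $\mathcal{C}^\perp$. By Lemma~\ref{ca}, both distinct-row arrays are OAs of strength at least $1$.

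\emph{Step 2: systematic form and the $LU_F$ conditions.} Since $q$ is prime, $\mathbb{GF}(q)=\mathbb{Z}_q$ is a field. A generator matrix of rank $m$ has $m$ linearly independent columns. After a column permutation $\pi$ and invertible row operations, we get $G=(I_m\mid G')$ generating $\pi(\mathcal{C})$. The column permutation acts identically on $\mathcal{C}$ and on $\mathcal{C}^\perp$, since $\pi(\mathcal{C})^\perp=\pi(\mathcal{C}^\perp)$. As remarked in Section 2, a common column permutation commutes with $U_q^{\otimes n}$ and preserves $LU_F$ relations. So it suffices to treat $\pi(\mathcal{C})$, which still has no zero coordinate and whose dual has no zero coordinate.

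Check the $LU_F$ conditions at strength $1$:
\begin{itemize}
\item Condition~(1) requires each column of $G$ to be nonzero, i.e.\ a unit $1\times1$ minor. This holds because no coordinate of $\pi(\mathcal{C})$ is identically zero.
\item Condition~(2) requires each row of $G'$ to be nonzero. If row $i$ of $G'$ were zero, then $e_i\in\pi(\mathcal{C})$. Every dual vector would then have $i$-th coordinate $0$, contradicting Step~1 for $\pi(\mathcal{C}^\perp)$.
\end{itemize}

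\emph{Step 3: conclusion.} Theorem~\ref{c2} now says the array of $\pi(\mathcal{C})$ is $LU_F$ equivalent, via $U_q^{\otimes n}$, to the array generated by $H=(-(G')^T\mid I_{n-m})$. That array is exactly the array of $\pi(\mathcal{C})^\perp$. Undoing $\pi$ on both arrays gives the $LU_F$ equivalence between the arrays of $\mathcal{C}$ and $\mathcal{C}^\perp$, and positive strength was established in Step~1.

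The main point needing care is Step~2. The systematic form requires a column permutation, and the permutation must be transported consistently to the dual. The translation of ``zero row of $G'$'' into ``zero coordinate of the dual'' must also be stated cleanly. The rest is routine.
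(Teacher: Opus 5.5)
Your proposal is correct and follows essentially the paper's argument. Shift-invariance of $\mathcal{C}$ and $\mathcal{C}^{\perp}$ gives full support; a systematic generator then satisfies the $LU_F$ conditions at strength $1$; and Theorem~\ref{c2} together with a common coordinate permutation finishes the proof. The one small difference is how you rule out a zero row of $G'$: you use full support of the shift-invariant $\mathcal{C}^{\perp}$, since $e_i\in\mathcal{C}$ would make coordinate $i$ vanish on the dual. The paper instead argues directly that the shifts of a weight-$1$ codeword would span $\mathbb{GF}(q)^n$, contradicting $\dim\mathcal{C}<n$.
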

\Proof Let $\lambda=1$ in the cyclic case and $\lambda=-1$ in the negacyclic case. Define the invertible shift
\[
T_\lambda(c_0,\ldots,c_{n-1})=(\lambda c_{n-1},c_0,\ldots,c_{n-2}).
\]
The code $\mathcal{C}$ is invariant under this shift. Since $\lambda^2=1$, the shift preserves the Euclidean inner product, and its inverse also preserves $\mathcal{C}$. Thus $\mathcal{C}^{\perp}$ is invariant under $T_\lambda$ as well.

Any nonzero shift-invariant code has full support. Indeed, its shifts move a nonzero coordinate through all positions, so no coordinate vanishes identically. Neither $\mathcal{C}$ nor $\mathcal{C}^{\perp}$ contains a word of weight $1$, because its shifts would span the whole ambient space, contrary to the dimension assumptions. Hence both codes have minimum distance at least $2$, and Lemma~\ref{ca} makes both associated arrays OAs of positive strength.

After a coordinate permutation and invertible row operations, choose a systematic generator $G=(I_m\mid G')$ for $\mathcal{C}$. Every column of $G$ is nonzero because $\mathcal{C}$ has full support. Every row of $G'$ is nonzero, since a zero row would give a codeword of weight $1$. Over the prime field these facts verify the $LU_F$ conditions at strength $1$. Theorem~\ref{c2} now identifies the Fourier partner with the dual-code array. Undoing the same coordinate permutation on both arrays preserves the Fourier identity in the original coordinate order.\qed

Binary Hamming codes admit cyclic realizations after a suitable coordinate ordering. For Example~\ref{e2}, any common column permutation of $G_3$ and $H_3$ preserves both code duality and the established Fourier relation; cyclicity is not needed for that example.

\section{Conclusion}

Quantum entanglement is one of the most extraordinary features of quantum physics.  $LU$  transformations can preserve the entanglement properties of quantum states.  In particular, if a state is maximally entangled, all states $LU$ equivalent to it are also maximally entangled\cite{Goyeneche2015}; likewise, if a state is $k$-uniform, every $LU$ equivalent state shares $k$-uniform property\cite{Goyeneche2014}. Understanding the interconvertibility of quantum states under $LU$  operations therefore contributes to the broader task of characterizing different classes of entangled states.
In this framework,  every column of an OA is identified with a particular qudit, and every
row corresponds to a linear term of the quantum state. Importantly, OAs are $LU$  equivalent in the sense that they lead to  $LU$  equivalent
quantum states. Quantum states constructed from OAs include diverse genuinely multipartite entangled states such
as $k$-uniform states and AME states.  Therefore, constructing $LU$  equivalent OAs is important, as it provides a concrete route to determine $LU$  equivalence among quantum states.

Goyeneche and \.{Z}yczkowski~\cite{Goyeneche2014} established a connection between OAs and multipartite quantum states, particularly between IrOAs and $k$-uniform states, and posed the problem of determining when OAs are $LU$  equivalent. Zhang et al.\cite{Zhang-Pang} provide partial solutions to the open problem. They developed methods of constructing infinite families of $LU$  equivalent OAs, both fixed-level and mixed-level via finite-field discrete Fourier transform.
In this paper, we have presented a wide range of methods for the question from constructing $LU$  equivalent OA and $LU$  equivalent IrOA classe via Fourier transforms.  In particular, for a prime $q$, an MDS generator matrix $G=(I_m\mid G')$ with $m<n$ generates an $\mathrm{OA}(q^m,n,q,m)$ that is $LU_F$ equivalent to an $\mathrm{OA}(q^{n-m},n,q,n-m)$ generated by $H=(-(G')^T\mid I_{n-m})$. These matrices are a generator matrix and a parity-check matrix of the same linear code. This connection provides constructions of $LU$  equivalent OAs from linear codes and their duals and motivates further investigation of the interplay between OA equivalence and coding theory. The results identify explicit Fourier partners rather than classify all $LU$  equivalent arrays. They also distinguish irredundancy at a prescribed common strength from irredundancy at each array's maximal strength.

Ramadas and Lakshminarayan~\cite{N-Arul} showed that phase-parametrized states associated with an IrOA$(r,N,d,k)$ yield infinitely many $LU$  equivalence classes when $r>Nd-(N-1)$. An interesting direction is to determine the number of $LU$  equivalence classes represented by the equal-amplitude states associated with OAs of prescribed parameters. Quantum orthogonal arrays (QOAs) generalize classical OAs and have been used to construct $k$-uniform states~\cite{Goyeneche2018,Zang3}. Du et al.~\cite{Du} also investigated $LU$  equivalence
of QOAs. Further constructions and classifications of QOAs under $LU$  equivalence remain promising research directions.

\section*{Declaration of competing interest}
The authors declare that they have no known competing financial interests or personal relationships that could have appeared to influence the work reported in this paper.

\section*{Data availability}
No data were used for the research described in this article.

\vspace{1cm}
\begin{center}\bfseries\Large Appendices\end{center}
\appendix

\section{Proof of Lemma~\ref{t4}}\label{app-A}

The block-diagonal generator produces exactly the Cartesian product of the row sets generated by $G_1$ and $G_2$. If a selection of $k$ columns uses $a$ columns from the first block and $b$ from the second, then $a+b=k\leq\min\{k_1,k_2\}$. The corresponding projection contains every $k$-tuple exactly
\[
d^{m_1-a}d^{m_2-b}=d^{m_1+m_2-k}
\]
times. Thus the product array has every strength in the claimed range.

Reordering the columns to place the two identity blocks first yields the systematic form
\[
\left(I_{m_1+m_2}\,\middle|\,
\begin{matrix}G_1'&0\\0&G_2'\end{matrix}\right).
\]
Every row of the right-hand block contains a unit. Thus the $LU_F$ conditions hold.

In this reordered coordinate system, Theorem~\ref{t3} gives the Fourier-partner generator
\[
\begin{pmatrix}
-(G_1')^T&0&I_{t_1}&0\\
0&-(G_2')^T&0&I_{t_2}
\end{pmatrix}.
\]
Undoing the same coordinate permutation gives $\operatorname{diag}(H_1,H_2)$ in the original coordinate order. Because $U_d^{\otimes(n_1+n_2)}$ commutes with this common permutation, the claimed Fourier relation holds in that order as well.\qed

\section{Proof of Lemma~\ref{l13}}\label{app-B}

Write $B=(B_0\mid B_1)$ according to the identity and remaining columns of $G_2$. Subtracting $B_0$ times the lower block of rows from the upper block, and then reordering columns, gives
\[
\left(I_{m_1+m_2}\,\middle|\,
\begin{matrix}G_1'&B_1-B_0G_2'\\0&G_2'\end{matrix}\right).
\]
Every row of the right-hand block contains a unit, inherited from $G_1'$ or $G_2'$.

To verify strength $k$, select $a$ columns from the first coordinate block and $b$ from the second, with $a+b=k$. For inputs $(x,y)\in\mathbb{Z}_d^{m_1}\times\mathbb{Z}_d^{m_2}$, the selected coordinates have the form $(xD_1,xB_*+yD_2)$. Each prescribed value of $xD_1$ has $d^{m_1-a}$ preimages. For each such $x$, each prescribed value of the second block has $d^{m_2-b}$ preimages in $y$. These counts also hold when $a=0$ or $b=0$. Thus every selected $k$-tuple occurs $d^{m_1+m_2-k}$ times, proving the claim by Construction~\ref{c}.\qed

\section{Proof of Theorem~\ref{c1}}\label{app-C}

Let $\mathcal{A}_i$ be the row set of $A_i$, with $|\mathcal{A}_i|=d^{m_i}$ for $i=1,2$. The row set of the block-diagonal construction is $\mathcal{A}_1\times\mathcal{A}_2$. Therefore
\begin{align*}
|\psi\rangle
&=d^{-(m_1+m_2)/2}\sum_{a\in\mathcal{A}_1}\sum_{b\in\mathcal{A}_2}|a,b\rangle\\
&=\left(d^{-m_1/2}\sum_{a\in\mathcal{A}_1}|a\rangle\right)
\otimes\left(d^{-m_2/2}\sum_{b\in\mathcal{A}_2}|b\rangle\right)
=|\psi_1\rangle\otimes|\psi_2\rangle.
\end{align*}
If $|\phi_i\rangle=V_i|\psi_i\rangle$, where $V_i$ is a tensor product of integer powers of $U_d$ on the parties of block $i$, then
\[
|\phi_1\rangle\otimes|\phi_2\rangle
=(V_1\otimes V_2)(|\psi_1\rangle\otimes|\psi_2\rangle).
\]
The operator $V_1\otimes V_2$ is again a tensor product of integer powers of $U_d$, proving part~(2). In particular, for the partners of Lemma~\ref{t4}, it is $U_d^{\otimes(n_1+n_2)}$.\qed

\section{Proof of Theorem~\ref{t6}}\label{app-D}

Let $A$ be generated by $G$. Condition~(1) gives strength $k\geq2$. The $m-1$ zero entries of $G'$ lie in distinct rows. Index these rows and columns so that the zero in column $j$ lies in row $j$  for $1\leq j\leq m-1$. Row $m$ then consists entirely of units. This is only an indexing choice; any coordinate relabeling is applied to both Fourier partners.

Two distinct rows of $A$ differ by $(z,zG')$ for a nonzero $z\in\mathbb{Z}_d^m$. Let $s$ be the number of nonzero coordinates of $z$.
\begin{itemize}
\item If $s=1$ and the nonzero coordinate is in one of the first $m-1$ rows, then $zG'$ has exactly $m-2$ nonzero coordinates. The total weight is $m-1\geq3$. If it is in row $m$, the total weight is $m\geq4$.
\item If $s=2$ and both nonzero coordinates are in rows $a,b<m$, the coordinates of $zG'$ in columns $a$ and $b$ are respectively $z_b g'_{ba}$ and $z_a g'_{ab}$. Both are nonzero because the coefficients are units. The total weight is at least $4$.
\item If $s=2$ and the nonzero coordinates are in rows $a<m$ and $m$, then column $a$ of $zG'$ equals $z_m g'_{ma}\neq0$. The total weight is at least $3$.
\item If $s\geq3$, the first $m$ coordinates already give weight at least $3$.
\end{itemize}
Thus $\mathrm{MD}(A)\geq3$, so $A$ is an $\mathrm{IrOA}(d^m,2m-1,d,2)$ by Lemma~\ref{l5}. Every row of $G'$ contains a unit. Theorem~\ref{t1} therefore gives a distinct-row Fourier partner $A'$ with $d^{m-1}$ rows. The state of $A$ is $2$-uniform, and the same is true of its $LU$  image. By Lemma~\ref{positive-uniform}, $A'$ is an $\mathrm{IrOA}(d^{m-1},2m-1,d,2)$.\qed

\section{Proof of Theorem~\ref{t7}}\label{app-E}

  Let $A$ be generated by $G$. Condition~(1) gives strength at least $2$. Consider a nonzero row difference $(z,zG')$. If $z$ has at least three nonzero coordinates, its weight is at least $3$. If $z$ has exactly one nonzero coordinate, the corresponding row of $G'$ has $m-1$ units, so the total weight is $m\geq3$. If $z$ has exactly two nonzero coordinates, say in rows $a$ and $b$, choose the column in which row $a$ has its unique zero. The entry in row $b$ of that column is a unit, so this coordinate of $zG'$ is nonzero. The total weight is again at least $3$.

Thus $\mathrm{MD}(A)\geq3$, and $A$ is an IrOA of strength $2$ by Lemma~\ref{l5}. Every row of $G'$ contains a unit, so Theorem~\ref{t1} supplies a distinct-row Fourier partner with $d^m$ rows. Its state is $2$-uniform, and Lemma~\ref{positive-uniform} makes it an $\mathrm{IrOA}(d^m,2m,d,2)$.\qed

\section{Proof of Theorem~\ref{RM}}\label{app-F}

Let $\mathcal{C}=\mathrm{RM}(r,m)$, where $1\leq r<m$. Its dimension and minimum distance are
\[
k=\sum_{i=0}^{r}\binom{m}{i},\qquad d(\mathcal{C})=2^{m-r},
\]
and its dual is $\mathrm{RM}(m-r-1,m)$, with dimension $2^m-k$ and minimum distance $2^{r+1}$~\cite{Feng,Rains,Tyagi}.

We verify the $LU_F$ conditions directly, without a nested induction on the recursive generator matrices. Put a generator of $\mathcal{C}$ in systematic form $\widetilde{G}=(I_k\mid P)$ by a coordinate permutation and invertible row operations. Since $d(\mathcal{C}^{\perp})=2^{r+1}$, every $2^{r+1}-1$ columns of $\widetilde{G}$ are independent by Lemma~\ref{l7} applied to the dual code. In particular, $2^{r+1}-1\leq k$, and Condition~(1) holds at this strength. No row of $P$ can be zero: otherwise the corresponding row of $\widetilde{G}$ would be a codeword of weight $1$, contradicting $d(\mathcal{C})=2^{m-r}\geq2$. Thus Condition~(2) also holds.

By Theorem~\ref{c2}, the codeword arrays of $\mathcal{C}$ and $\mathcal{C}^{\perp}$ are $LU_F$ equivalent. A common inverse coordinate permutation restores the original order and preserves the Fourier identity. Lemma~\ref{ca} gives their respective maximal OA strengths $2^{r+1}-1$ and $2^{m-r}-1$. Hence the arrays have parameters
\[
\mathrm{OA}(2^k,2^m,2,2^{r+1}-1),\qquad
\mathrm{OA}(2^{2^m-k},2^m,2,2^{m-r}-1).
\]
Finally, the array from $\mathcal{C}$ has strength $t$ and is irredundant precisely when $t\leq d(\mathcal{C}^{\perp})-1$ and $t\leq d(\mathcal{C})-1$. The same two inequalities apply to the dual array. Thus both are IrOAs of strength $t$ exactly for
\[
1\leq t\leq\min\{2^{r+1}-1,2^{m-r}-1\}.
\]
This includes the endpoint $r=m-1$, where the dual is the repetition code and the common IrOA strength is $1$.\qed


\begin{thebibliography}{99}
\bibitem{Bajalan}
M. Bajalan and P. Boyvalenkov. On irredundant orthogonal arrays. Discrete Appl. Math., 2026, 387: 199--208.


\bibitem{Chen}
 G. Z. Chen and X. T. Zhang. Constructions of irredundant orthogonal arrays.   Adv. Math. Commun., 2023, 17(6): 1314--1337.


 \bibitem{Du}
J. Du, C. J. Yin, S. Q. Pang and T. Y. Wang.  Local equivalence of quantum orthogonal arrays and orthogonal arrays. Quantum Inf. Process., 2020, 19: 303.

\bibitem{Dur}
W. D\"{u}r, G. Vidal and J. I. Cirac. Three qubits can be entangled in two inequivalent ways. Phys. Rev. A, 2000, 62: 062314.



\bibitem{Feng}
 K. Q. Feng. Algebraic Theory of Error-Correction Codes.  Beijing: Tsinghua University Press, 2005.



\bibitem{Goyeneche2015}
D. Goyeneche, D. Alsina, J. I. Latorre, A. Riera and K. \.{Z}yczkowski. Absolutely maximally entangled states, combinatorial designs, and multiunitary matrices. Phys. Rev. A, 2015, 92: 032316.


\bibitem{Goyeneche2018}
D. Goyeneche, Z. Raissi, S. Di Martino and K. \.{Z}yczkowski. Entanglement and quantum combinatorial designs. Phys. Rev. A, 2018, 97: 062326.

\bibitem{Goyeneche2014}
D. Goyeneche and K. \.{Z}yczkowski. Genuinely multipartite entangled states and orthogonal arrays. Phys. Rev. A, 2014, 90: 022316.

\bibitem{Hedayat1999}
 A. S. Hedayat, N. J. A. Sloane and J. Stufken. Orthogonal Arrays: Theory and Applications. Springer-Verlag, New York, 1999.


\bibitem{Kraus}
B. Kraus. Local unitary equivalence of multipartite pure states. Phys. Rev. Lett.,  2010, 104: 020504.


\bibitem{Li}
M. S. Li and Y. L. Wang. $k$-uniform quantum states arising from orthogonal arrays.  Phys. Rev. A, 2019, 99: 042332.

\bibitem{Liyu}
Y. K. Li, J. M. Yang and B. H. Zhang. Linear equations on residual class rings $\mathbb{Z}_m$. J. Shaanxi Norm. Univ. Nat. Sci. Ed.,  2004, (S1): 24--28.

\bibitem{Liu}
 B. Liu, J. L. Li, X. K. Li and C. F. Qiao. Local unitary classification of arbitrary dimensional
multipartite pure states. Phys. Rev. Lett.,  2012, 108: 050501.

\bibitem{Owen}
A. B. Owen. Orthogonal arrays for computer experiments, integration and visualization. Statistica Sinica, 1992, 2: 439--452.

\bibitem{Pang1}
 S. Q. Pang, X. Zhang, X. Lin and Q. J. Zhang. Two and three-uniform states from irredundant orthogonal arrays.  npj Quantum Inf., 2019, 5: 52.

 \bibitem{Pang}
S. Q. Pang,  R. N. Zhang and X. Zhang. Quantum frequency arrangements, quantum mixed orthogonal arrays and entangled states. IEICE Trans. Fundam., 2020, E103(12): 1674--1678.

\bibitem{Rains}
E. M. Rains and N. J. A. Sloane. Self-dual codes. Handbook of Coding Theory, Vol. 1. Amsterdam: North-Holland, 1998: 177--294.

\bibitem{Raissi}
Z. Raissi, C. Gogolin, A. Riera and A. Ac\'{i}n. Constructing optimal quantum error correcting codes from absolute maximally entangled states. J. Phys. A: Math. Theor., 2018, 51: 075301.


\bibitem{N-Arul}
N. Ramadas and A. Lakshminarayan.
Local unitary equivalence of absolutely maximally entangled states constructed from orthogonal arrays. J. Phys. A: Math. Theor., 2025, 58: 125301.


\bibitem{Rao}
C. R. Rao. Factorial experiments derivable from combinatorial arrangements of arrays. J. Roy. Statist. Soc. Suppl., 1947, 9: 128--139.


\bibitem{Tyagi}
 V. Tyagi and S. Rani. Recursive matrix method for GRM and DGRM codes. Int. Electron. J. Pure Appl. Math., 2012, 4(4): 263--270.

\bibitem{Verstraete}
F. Verstraete, J. Dehaene, B. De Moor and H. Verschelde. Four qubits can be entangled in nine different ways. Phys. Rev. A, 2002, 65: 052112.

\bibitem{xuming}
M. Xu and Z. Tian. A flexible image cipher based on orthogonal arrays. Inform. Sciences, 2021, 551: 39--53.

 \bibitem{Zang1}
 Y. Zang,  G. Chen, K. Chen and Z. Tian. Further results on 2-uniform states arising from irredundant orthogonal arrays.  Adv. Math. Commun., 2022, 16: 231--247.

\bibitem{Zang2}
 Y. Zang, P. Facchi and Z. Tian. Quantum combinatorial designs and $k$-uniform states.   J. Phys. A: Math. Theor., 2021, 54(50): 505204.

 \bibitem{Zang3}
 Y. Zang,  Z.  Tian, S. M. Fei and H. J. Zuo. Quantum $k$-uniform states from quantum
orthogonal arrays. Int. J. Theor. Phys., 2023, 62: 73.

\bibitem{Zhang}
T.  Zhang, M. J. Zhao, M. Li, S. M. Fei and X. Li-Jost. Criterion of local unitary equivalence for multipartite states. Phys. Rev. A, 2013, 88: 3367-3376.

\bibitem{Zhang-Pang}
X. Zhang, S. Q. Pang, M. Q. Chen and S. M. Fei. Construction of local unitary equivalent quantum states via local
unitary equivalent orthogonal arrays.  New J. Phys.   2026, 28: 084513.
\end{thebibliography}
\end{document}